\newtheorem{theorem}{Theorem}
\newtheorem{proposition}{Proposition}
\newtheorem{lemma}{Lemma}
\title{\textbf{Energy Efficient Scheduling and Routing via Randomized Rounding}}
\author{\\
Evripidis Bampis\footnote{Sorbonne Universit\'{e}s, UPMC Univ Paris 06, UMR 7606, LIP6, F-75005, France \newline\emph{\{Evripidis.Bampis, Giorgio.Lucarelli, Dimitrios.Letsios\}@lip6.fr} \newline}\hspace*{1cm}
Alexander Kononov\footnote{Sobolev Institute of Mathematics, Novosibirsk, Russia \newline
\emph{alvenko@math.nsc.ru}\newline} \hspace*{1cm}
Dimitrios Letsios$^*$ \\ \\ Giorgio Lucarelli$^*$ \hspace*{1cm}
Maxim Sviridenko\footnote{Yahoo\! Labs, New York, NY \newline \emph{sviri@yahoo-inc.com}}
}
\date{}
\begin{document}

\maketitle

\begin{abstract}
We propose a unifying framework based on configuration linear programs and randomized rounding,
for different energy optimization problems in the dynamic speed-scaling setting.
We apply our framework to various scheduling and routing problems in heterogeneous computing and networking environments.
We first consider the energy minimization problem of scheduling a set of jobs on a set of parallel speed scalable processors in a fully heterogeneous setting.
For both the preemptive-non-migratory and the preemptive-migratory variants,
our approach allows us to obtain solutions of almost the same quality as for the homogeneous environment.
By exploiting the  result for the preemptive-non-migratory variant, we are able
to improve the best known approximation ratio for the single processor non-preemptive problem.
Furthermore, we show that our approach allows to obtain a constant-factor approximation algorithm for the power-aware preemptive job shop scheduling problem.
Finally, we consider the min-power routing problem
where we are given a network modeled by an undirected graph and a set of uniform demands
that have to be routed on integral routes from their sources to their destinations so that the energy consumption is minimized.
We improve the best known approximation ratio for this problem.
\end{abstract}

\section{Introduction}

We focus on energy minimization problems in heterogeneous computing and networking environments in the dynamic speed-scaling setting.
For many years, the exponential increase of processors' frequencies followed Moore's law.
This is no more possible because of physical (thermal) constraints.
Today, for improving the performance of modern computing systems, designers use parallelism,
i.e., multiple cores running at lower frequencies but offering better performances than a single core.
These systems can be either homogeneous where an identical core is used many times, or heterogeneous combining general-purpose and special-purpose cores.
Heterogeneity offers the possibility of further improving the performance of the system
by executing each job on the most appropriate type of processors \cite{BrodtkorbDHHS10}.
However in order to exploit the opportunities offered by the heterogeneous systems,
it is essential to focus on the design of new efficient power-aware algorithms taking into account the heterogeneity of these architectures.
In this direction, Gupta et al. in \cite{GuptaIKMP12} have studied the impact of the introduction of the heterogeneity
on the difficulty of various power-aware scheduling problems.

In  this paper, we show that rounding configuration linear programs  helps in handling the heterogeneity of both the jobs and the processors.
We adopt one of the main mechanisms for reducing the energy consumption in modern computer systems which is based on the use of speed scalable processors.
Starting from the seminal paper of Yao et al. \cite{YaoDS95}, many papers adopted the speed-scaling model
in which if a processor runs at speed $s$, then the rate of the energy consumption, i.e., the power, is $P(s)=s^{\alpha}$ with $\alpha$ a constant close to 3
(new studies show that $\alpha$ is rather smaller: 1.11 for Intel PXA 270, 1.62 for Pentium M770 and 1.66 for a TCP offload engine \cite{WiermanAT09}).
Moreover, the energy consumption is the integral of the power over time.
This model captures the intuitive idea that the faster a processor works the more energy it consumes.

We first consider a {\em fully heterogeneous environment} where both
the jobs' characteristics are processor-dependent and every processor has its own power function.
Formally, we consider the following problem: we are given a set $\mathcal{J}$ of $n$ jobs and a set $\mathcal{P}$ of $m$ parallel processors.
Every processor $i \in \mathcal{P}$ obeys to a different speed-to-power function, i.e., it is associated with a different $\alpha_i \geq 1$
and hence if a job runs at speed $s$ on processor $i$, then the power is $P(s)=s^{\alpha_i}$.
Each job $j \in \mathcal{J}$ has a different release date $r_{i,j}$, deadline $d_{i,j}$ and workload $w_{i,j}$
if it is executed on processor $i \in \mathcal{P}$.
The goal is to find a schedule of minimum energy respecting the release dates and the deadlines of the jobs.

The assumption that the jobs have processor-dependent works
covers for example the problem of scheduling in the restricted assignment model (see \cite{Svensson11}).
In this model each job is associated with a subset of processors and has to be executed on one of them.
Clearly, in our model the work of each job is the same on the processors of its corresponding subset and infinite on the remaining processors.
Moreover, processor-dependent release dates have been already studied in the literature when the processors are connected by a network.
In such a case, it is assumed that every job is initially available at a given processor
and that a transfer time must elapse before it becomes available at a new machine \cite{AwerbuchKP92,DengLX90}.

In this paper we propose a unifying framework for minimizing energy in different heterogeneous computing and networking environments.
We first consider two variants of the heterogeneous multiprocessor {\em preemptive} problem.
In both cases, the execution of a job may be interrupted and resumed later.
In the {\em non-migratory} case each job has to be entirely executed on a single processor.
In the {\em migratory} case each job may be executed by more than one processors, without allowing parallel execution of a job.
We also focus on the {\em non-preemptive} single processor case.
Furthermore, we consider the energy minimization problem in an heterogeneous \emph{job shop} environment where the jobs can be preempted.
Finally, we study the {\em min-power routing problem}, introduced in \cite{AndrewsAZZ12},
where a set of uniform demands have to be routed on integral routes from their sources to
their destinations so that the energy consumption to be minimized.
We believe that our general techniques will find further applications in energy optimization.

\subsection{Related Work}
Yao et al. \cite{YaoDS95} proposed an optimal algorithm for finding a feasible preemptive schedule
with minimum energy consumption when a single processor is available.

The homogeneous multiprocessor case has been solved optimally in polynomial time when both the preemption and the migration of jobs are allowed
\cite{AlbersAG11,AngelBKL12,BampisLL12,BinghamG08}.
Albers et al. \cite{AlbersMS07} considered the homogeneous multiprocessor preemptive problem, where the migration of the jobs is not allowed.
They proved that the problem is $\mathcal{NP}$-hard even for instances with common release dates and common deadlines.
Greiner et al. \cite{GreinerNS09} gave a generic reduction transforming an optimal schedule for the homogeneous multiprocessor problem with migration,
to a $B_{\lceil\alpha\rceil}$-approximate solution for the homogeneous multiprocessor preemptive problem without migration,
where $B_{\lceil\alpha\rceil}$ is the $\lceil\alpha\rceil$-th Bell number.

Antoniadis and Huang \cite{AntoniadisH12} proved that the single processor non-preemptive problem is $\mathcal{NP}$-hard
even for instances in which for any two jobs $j$ and $j'$ with $r_j \leq r_{j'}$ it holds that $d_j \geq d_{j'}$.
They also proposed a $2^{5\alpha-4}$-approximation algorithm for general instances.
For the homogeneous multiprocessor non-preemptive case an approximation algorithm of ratio
$m^{\alpha}(\sqrt[m]{n})^{\alpha-1}$ has been proposed in \cite{BampisKLLN13}.

Andrews et al. \cite{AndrewsAZZ12} studied the min-power routing problem and for uniform demands,
i.e. for the case where all the demands have the same value, they proposed a $\gamma$-approximation algorithm,
where $\gamma=\max\{1+\tau 2^{\alpha (\tau+1) \log e}, 2+ \tau 2^{\alpha(\tau+1)}\}$, with $\tau=\lceil 2 \log (\alpha+4)\rceil$.
For non-uniform demands, they proposed a $O(\log^{\alpha-1} D)$-approximation algorithm, where $D$ is the maximum value of the demands.

For further results on energy-efficient scheduling we refer the interested reader to the reviews \cite{Albers10,Albers11}.

\subsection{Notation}
We denote by $E(\mathcal{S})$ the total energy consumed by a schedule $\mathcal{S}$.
Moreover, we denote by $\mathcal{S}^*$ an optimal schedule and by $OPT$ the energy consumption of $\mathcal{S}^*$.
For each job $j \in \mathcal{J}$, we say that $j$ is \emph{alive} on processor $i \in \mathcal{P}$ during the interval $[r_{i,j},d_{i,j}]$.
Let $\alpha=\max_{i \in \mathcal{P}}\{\alpha_i\}$.
The Bell number, $B_n$, is defined for any integer $n\ge 0$ and corresponds to the number of partitions of a set of $n$ items.
It is well known that Bell numbers satisfy the following equality
\begin{equation*}
B_{n}=\sum_{k=0}^{\infty}\frac{k^{n} e^{-1}}{k!}
\end{equation*}
known as  Dobinski's formula.
Another way to state this formula is that the $n$-th Bell number is equal to the $n$-th moment of a Poisson random variable with parameter (expected value) $1$.
This naturally leads to a more general definition.
The generalized Bell number, denoted by $\tilde{B}_{\alpha}=\sum_{k=0}^{\infty}\frac{k^{\alpha} e^{-1}}{k!}$,
is defined for any $\alpha \in \mathbb{R}^+$ and corresponds to the $\alpha$-th (fractional) moment of a Poisson random variable with parameter $1$.
Note that the ratios of our algorithms depend on the generalized Bell number,
while the previous known \cite{GreinerNS09} on the standard Bell number.

\subsection{Our Contribution}
In this paper we formulate heterogeneous scheduling and routing problems using \emph{configuration linear programs (LPs)}
and we apply \emph{randomized rounding}.
In Section~\ref{section:non-migration}, we consider the heterogeneous multiprocessor speed-scaling problem without migrations
and we propose an approximation algorithm of ratio $(1+\varepsilon)\tilde{B}_{\alpha}$.
As this LP has an exponential number of variables,
we give an alternative (compact) formulation of the problem using a polynomial number of variables
and we prove the equivalence between the two LP relaxations.
For real values of $\alpha$ our result improves the $B_{\lceil\alpha\rceil}$ approximation ratio of \cite{GreinerNS09} for the homogeneous case
to $(1+\varepsilon)\tilde{B}_{\alpha}$ for the fully heterogeneous environment that we consider here (see Table~\ref{tbl:results}).
In Section~\ref{section:migration}, using again a configuration LP formulation,
we present an algorithm for the heterogeneous multiprocessor speed-scaling problem with migration.
This algorithm returns a solution which is
within an additive factor of $\varepsilon$ far from the optimal solution
and runs in time polynomial to the size of the instance and to $1/\varepsilon$.
This result generalizes the results of \cite{AlbersAG11,AngelBKL12,BampisLL12,BinghamG08} from an homogeneous environment
to a fully heterogeneous environment.
In Section~\ref{section:single}, we transform the single processor speed-scaling problem without preemptions to
the heterogeneous multiprocessor problem without migrations and we give an approximation algorithm of ratio $2^{\alpha-1}(1+\varepsilon)\tilde{B}_{\alpha}$,
improving upon the previous known $2^{5\alpha-4}$-approximation algorithm in \cite{AntoniadisH12} for any $\alpha<114$ (see Table~\ref{tbl:results}).
In Section~\ref{section:jobshop}, we study the power-aware preemptive job shop scheduling problem and we propose a $((1+\varepsilon) \tilde{B}_{\alpha})$-approximation algorithm.
Finally, in Section~\ref{section:routing}, we improve the analysis for the min-power routing problem with uniform demands given in \cite{AndrewsAZZ12},
based on the randomized rounding analysis that we propose in this paper.
Our approach gives an approximation ratio of $\tilde{B}_{\alpha}$ significantly improving the analysis given in \cite{AndrewsAZZ12} (see Table~\ref{tbl:results}).

\begin{table}
\begin{center}
\begin{tabular}{c||c|c||c|c||c|c}
& \multicolumn{2}{|c||}{Multiprocessor} &\multicolumn{2}{|c||}{Non-Preemptive} & \multicolumn{2}{|c}{Routing}\\
Value of & \multicolumn{2}{|c||}{Non-Migratory} &\multicolumn{2}{|c||}{Single processor} & \multicolumn{2}{|c}{Uniform Demands}\\
\cline{2-7}
$\alpha$ & \cite{GreinerNS09} & Our & \multirow{2}{*}{\cite{AntoniadisH12}} & \multirow{2}{*}{Our} & \multirow{2}{*}{\cite{AndrewsAZZ12}} & \multirow{2}{*}{Our} \\
& Homogeneous & Heterogeneous & & & & \\
\hline
\hline
1.11 & 2 & 1.07(1+$\epsilon$) & 2.93  & 1.15(1+$\epsilon$) & 375    & 1.07 \\
1.62 & 2 & 1.49(1+$\epsilon$) & 17.15 & 2.30(1+$\epsilon$) & 2196   & 1.49 \\
1.66 & 2 & 1.54(1+$\epsilon$) & 19.70 & 2.43(1+$\epsilon$) & 2522   & 1.54 \\
2    & 2 & 2(1+$\epsilon$)    & 64    & 4(1+$\epsilon$)    & 8193   & 2    \\
2.5  & 5 & 3.08(1+$\epsilon$) & 362   & 8.72(1+$\epsilon$) & 46342  & 3.08 \\
3    & 5 & 5(1+$\epsilon$)    & 2048  & 20(1+$\epsilon$)   & 262145 & 5
\end{tabular}
\end{center}
\caption{Comparison of our approximation ratios vs. better previous known ratios for:
(i) the preemptive multiprocessor problem without migrations, (ii) the single processor non-preemptive problem, and (iii) the min-power routing problem.}
\label{tbl:results}
\end{table}

\section{Technical Probabilistic Propositions}

In this section, we state and prove a series of technical propositions which are key ingredients in our analysis.
Proposition \ref{Prop:ConcaveExpectation} bounds the expectation of a specific function of random variables.

\begin{proposition}
\label{Prop:ConcaveExpectation}
Consider $n$ random variables $X_1,X_2,\ldots, X_n$ and let $\alpha>1$. Then, it holds that
\begin{equation*}
\mathbb{E}\left[\left(\sum_{i=1}^nX_i^{1/\alpha}\right)^{\alpha}\right]\leq \left(\sum_{i=1}^n\mathbb{E}[X_i]^{1/\alpha}\right)^{\alpha}
\end{equation*}
\end{proposition}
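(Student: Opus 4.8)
The plan is to recognize the claimed inequality as a repackaging of Minkowski's inequality for the $L^{\alpha}$ norm. Since the expression $X_i^{1/\alpha}$ appears, I will assume throughout that the $X_i$ are nonnegative, so that every quantity below is well defined and real. The natural first move is the substitution $Y_i = X_i^{1/\alpha}$, equivalently $X_i = Y_i^{\alpha}$, under which $\mathbb{E}[X_i] = \mathbb{E}[Y_i^{\alpha}]$ and the target inequality becomes
\begin{equation*}
\mathbb{E}\left[\left(\sum_{i=1}^n Y_i\right)^{\alpha}\right] \leq \left(\sum_{i=1}^n \left(\mathbb{E}[Y_i^{\alpha}]\right)^{1/\alpha}\right)^{\alpha}.
\end{equation*}

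Writing $\|Y\|_{\alpha} = (\mathbb{E}[Y^{\alpha}])^{1/\alpha}$ for the $L^{\alpha}$ norm, the right-hand side is exactly $\left(\sum_i \|Y_i\|_{\alpha}\right)^{\alpha}$ while the left-hand side equals $\left\|\sum_i Y_i\right\|_{\alpha}^{\alpha}$. Taking $\alpha$-th roots of both sides, which is legitimate because both sides are nonnegative and $t \mapsto t^{1/\alpha}$ is increasing on $[0,\infty)$, the statement reduces to
\begin{equation*}
\left\|\sum_{i=1}^n Y_i\right\|_{\alpha} \leq \sum_{i=1}^n \|Y_i\|_{\alpha},
\end{equation*}
which is precisely the triangle inequality for the $L^{\alpha}$ norm, i.e.\ Minkowski's inequality, valid since $\alpha > 1$. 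Raising back to the $\alpha$-th power and undoing the substitution yields the claim.

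If a self-contained argument is preferred to a citation, I would establish the two-variable case and then induct on $n$. The core estimate splits $\mathbb{E}[(Y_1+Y_2)^{\alpha}]$ as $\mathbb{E}[Y_1 (Y_1+Y_2)^{\alpha-1}] + \mathbb{E}[Y_2 (Y_1+Y_2)^{\alpha-1}]$ and applies H\"older's inequality to each term with conjugate exponents $\alpha$ and $\alpha/(\alpha-1)$. Since $(\alpha-1)\cdot \frac{\alpha}{\alpha-1} = \alpha$, each factor built from $(Y_1+Y_2)^{\alpha-1}$ collapses to $(\mathbb{E}[(Y_1+Y_2)^{\alpha}])^{(\alpha-1)/\alpha}$; dividing through by this common factor produces the two-term bound, and the induction step $\left\|\sum_{i=1}^n Y_i\right\|_{\alpha} \leq \|Y_1\|_{\alpha} + \left\|\sum_{i=2}^n Y_i\right\|_{\alpha}$ finishes the general case.

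The main obstacle is bookkeeping rather than conceptual: I must keep nonnegativity and finiteness of the relevant moments in force so that the substitution and the division step in the H\"older argument are valid, and I must carry out the conjugate-exponent arithmetic correctly. Conceptually there is nothing deeper than Minkowski's inequality here, and the substitution $Y_i = X_i^{1/\alpha}$ is what makes the correct direction of the inequality transparent.
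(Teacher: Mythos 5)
Your proof is correct and follows essentially the same route as the paper: the substitution $Y_i = X_i^{1/\alpha}$ followed by an appeal to Minkowski's inequality in $L^{\alpha}$ is exactly the paper's argument, with your version merely spelling out the norm bookkeeping (and the optional H\"older-based proof of Minkowski) in more detail. No gap to report.
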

\begin{proof}
By defining random variables $Y_i=X_i^{1/\alpha}$ and applying the Minkowski's inequality we derive
\begin{equation*}
\mathbb{E}\left[\left(\sum_{i=1}^nY_i\right)^{\alpha}\right]\leq \left(\sum_{i=1}^n\mathbb{E}[Y_i^{\alpha}]^{1/\alpha}\right)^{\alpha}=
\left(\sum_{i=1}^n\mathbb{E}[X_i]^{1/\alpha}\right)^{\alpha}
\end{equation*}
\end{proof}

Proposition~\ref{prop:split} deals with the expressions arising when one estimates the moments of random variables with Binomial distributions.

\begin{proposition}\label{prop:split}
Consider a set of non-negative constants $\{e_1,e_2,\ldots,e_n\}$.
For any subset $A$ of these constants, let $f(A)=\left(\sum_{j\in A}e_j^{1/\alpha}\right)^{\alpha}$.
Let $S$ be a random set generated by choosing each element $e_j$, $1\leq j\leq n$, independently at random with probability $Y_j$.
Moreover, let $e_{n}=e_{n+1}$.
Assume that $S'$ is a random set generated by sampling independently at random within $\{e_1,e_2,\ldots,e_{n+1}\}$,
with probabilities $Y_1',Y_2',\ldots,Y_{n+1}'$, where $Y_j'=Y_j$, for $1\leq j\leq n-1$, and $Y_n'+Y_{n+1}'=Y_n$.
Then,
\begin{equation*}
\mathbb{E}[f(S)]\leq\mathbb{E}[f(S')]
\end{equation*}
\end{proposition}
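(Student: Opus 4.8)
The plan is to \emph{condition} on the inclusion status of the first $n-1$ elements, which are sampled identically in $S$ and $S'$, and then compare only the effect of splitting $e_n$. Fix a subset $B\subseteq\{e_1,\ldots,e_{n-1}\}$ of the elements chosen among the first $n-1$, set $c=\sum_{j\in B}e_j^{1/\alpha}$ for the contribution of these common elements, and write $a=e_n^{1/\alpha}=e_{n+1}^{1/\alpha}$ for the (identical) contribution of the split element. Since the first $n-1$ elements are selected with the same probabilities $Y_j=Y_j'$ in both experiments, it is enough to show $\mathbb{E}[f(S)\mid B]\le\mathbb{E}[f(S')\mid B]$ for every such $B$; averaging over the common distribution of $B$ then yields the proposition.

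First I would write out the two conditional expectations. In $S$ the element $e_n$ is present with probability $Y_n$, so
\begin{equation*}
\mathbb{E}[f(S)\mid B]=Y_n(c+a)^\alpha+(1-Y_n)c^\alpha .
\end{equation*}
In $S'$ the two copies $e_n,e_{n+1}$ are chosen independently with probabilities $Y_n',Y_{n+1}'$, yielding four cases according to how many copies are selected:
\begin{equation*}
\mathbb{E}[f(S')\mid B]=Y_n'Y_{n+1}'(c+2a)^\alpha+\bigl(Y_n'(1-Y_{n+1}')+(1-Y_n')Y_{n+1}'\bigr)(c+a)^\alpha+(1-Y_n')(1-Y_{n+1}')c^\alpha .
\end{equation*}

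Next I would subtract these, using the hypothesis $Y_n=Y_n'+Y_{n+1}'$. Collecting the coefficients of $(c+2a)^\alpha$, $(c+a)^\alpha$ and $c^\alpha$, every linear term cancels and the difference collapses to
\begin{equation*}
\mathbb{E}[f(S')\mid B]-\mathbb{E}[f(S)\mid B]=Y_n'Y_{n+1}'\bigl[(c+2a)^\alpha-2(c+a)^\alpha+c^\alpha\bigr].
\end{equation*}
The bracketed quantity is precisely the second finite difference of the map $t\mapsto(c+t)^\alpha$ at the points $0,a,2a$. Since $\alpha\ge 1$ this map is convex, so $(c+a)^\alpha\le\tfrac12\bigl[c^\alpha+(c+2a)^\alpha\bigr]$, which makes the bracket nonnegative; together with $Y_n',Y_{n+1}'\ge 0$ this gives $\mathbb{E}[f(S')\mid B]\ge\mathbb{E}[f(S)\mid B]$, and averaging over $B$ finishes the argument.

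The only substantive computation is the cancellation in the subtraction, where one must track the four probability cases of $S'$ correctly; once the difference has been reduced to $Y_n'Y_{n+1}'$ times a second difference, the inequality is an immediate consequence of the convexity of $x^\alpha$. I therefore expect that bookkeeping step, rather than any analytic difficulty, to be the main thing to get right.
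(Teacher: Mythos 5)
Your proof is correct and follows essentially the same route as the paper's: both condition on the (identically distributed) outcome for $e_1,\ldots,e_{n-1}$, use $Y_n=Y_n'+Y_{n+1}'$ to collapse the difference of conditional expectations to $Y_n'Y_{n+1}'\bigl[(c+2a)^\alpha-2(c+a)^\alpha+c^\alpha\bigr]$, and conclude by midpoint convexity of $x\mapsto x^\alpha$. The paper writes this as a sum over subsets $T\subseteq\{e_1,\ldots,e_{n-1}\}$ with notation $A,B$ in place of your $c,a$, but the decomposition, cancellation, and convexity step are identical.
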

\begin{proof}
Let $Pr(T)$ be the probability that exactly the constants in the set $T$ are chosen among the constants in $U=\{e_1,e_2,\ldots,e_{n-1}\}$.
That is, $Pr(T)=\prod_{e_j\in T} Y_j\prod_{e_j\in U\setminus T}(1-Y_j)$.
We have that
\begin{eqnarray*}
\mathbb{E}[f(S)]
& = & \sum_{T\subseteq U}Pr(T)\Bigg[(1-Y_n)\cdot\left(\sum_{j\in T}e_j^{1/\alpha}\right)^{\alpha} + Y_n\cdot\left(\sum_{j\in T}e_j^{1/\alpha}+e_n^{1/\alpha}\right)^{\alpha}\Bigg]
\end{eqnarray*}
As $Y_j=Y_j'$ for $1\leq j\leq n-1$, it holds that
\begin{eqnarray*}
\mathbb{E}[f(S')]
& = & \sum_{T\subseteq U}Pr(T)\Bigg[(1-Y_n')\cdot(1-Y_{n+1}')\cdot\left(\sum_{j\in T}e_j^{1/\alpha}\right)^{\alpha}+ Y_n'\cdot(1-Y_{n+1}')\cdot\left(\sum_{j\in T}e_j^{1/\alpha}+e_n^{1/\alpha}\right)^{\alpha}\\
&  & + (1-Y_n')\cdot Y_{n+1}'\cdot\left(\sum_{j\in T}e_j^{1/\alpha}+e_{n+1}^{1/\alpha}\right)^{\alpha}+Y_n'\cdot Y_{n+1}'\cdot\left(\sum_{j\in T}e_j^{1/\alpha}+e_n^{1/\alpha} +e_{n+1}^{1/\alpha}\right)^{\alpha}\Bigg]\\
\end{eqnarray*}
For notational convenience, we denote $A=\sum_{j\in T}e_j^{1/\alpha}$ and $B=e_n^{1/\alpha}=e_{n+1}^{1/\alpha}$.
In order to prove the proposition it suffices to show that
\begin{eqnarray*}
(1-Y_n)A^{\alpha} + Y_n\cdot\left(A+B\right)^{\alpha}  \leq  (1-Y_n')\cdot(1-Y_{n+1}')\cdot A^{\alpha}+Y_n'\cdot(1-Y_{n+1}')\cdot\left(A+B\right)^{\alpha} \\
+(1-Y_n')\cdot Y_{n+1}'\cdot\left(A+B\right)^{\alpha}
+Y_n'\cdot Y_{n+1}'\cdot\left(A+2B\right)^{\alpha}
\end{eqnarray*}
Given the fact that $Y_n=Y_n'+Y_{n+1}'$, the above inequality can be rewritten equivalently as
\begin{equation*}
Y_n'Y_{n+1}'(A^\alpha-2(A+B)^{\alpha}+(A+2B)^{\alpha})\geq0
\end{equation*}
If either $Y_n'=0$ or $Y_{n+1}'=0$, the inequality is clearly true.
Otherwise, the inequality holds due to the convexity of the function $g(x)=x^{\alpha}$.
Specifically, for any $x,y>0$ and $\theta\in[0,1]$, it must be the case that
\begin{equation*}
g(\theta x+(1-\theta)y)\leq \theta g(x) +(1-\theta)g(y)
\end{equation*}
For $x=A$, $y=A+2B$ and $\theta=\frac{1}{2}$, we get that
\begin{equation*}
g\left(\frac{1}{2}A+\frac{1}{2}(A+2B)\right)\leq \frac{1}{2}g(A)+\frac{1}{2}g(A+2B) \Leftrightarrow 2(A+B)^{\alpha}\leq A^{\alpha}+(A+2B)^{\alpha}
\end{equation*}
\end{proof}

Proposition~\ref{prop:meanInequality} is a corollary of the generalized means inequality.

\begin{proposition}
\label{prop:meanInequality}
For any set of positive values $S=\{e_1,e_2,\ldots,e_n\}$ and constant $\alpha>1$,
\begin{equation*}
\left(\sum_{e_i\in S}e_i^{1/\alpha}\right)^{\alpha}\leq |S|^{\alpha-1}\sum_{e_i\in S}e_i
\end{equation*}
\end{proposition}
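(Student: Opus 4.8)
The plan is to prove this directly from the generalized (power) means inequality, which states that for positive values the $1/\alpha$-power mean is dominated by the arithmetic mean when $1/\alpha < 1$, i.e. when $\alpha > 1$. Concretely, the power mean of order $p = 1/\alpha$ of the values $e_1,\ldots,e_n$ is
\begin{equation*}
M_p = \left(\frac{1}{n}\sum_{e_i\in S}e_i^{p}\right)^{1/p},
\end{equation*}
and since $p = 1/\alpha \le 1 = M_1$'s order, the monotonicity of power means gives $M_{1/\alpha} \le M_1$, where $M_1 = \frac{1}{n}\sum_{e_i\in S}e_i$ is the ordinary arithmetic mean.

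First I would write out $M_{1/\alpha} \le M_1$ explicitly with $n = |S|$:
\begin{equation*}
\left(\frac{1}{|S|}\sum_{e_i\in S}e_i^{1/\alpha}\right)^{\alpha}\leq \frac{1}{|S|}\sum_{e_i\in S}e_i.
\end{equation*}
Then I would clear the $1/|S|$ factor from the left-hand side: pulling it out of the $\alpha$-th power contributes a factor $|S|^{-\alpha}$, so multiplying both sides by $|S|^{\alpha}$ yields
\begin{equation*}
\left(\sum_{e_i\in S}e_i^{1/\alpha}\right)^{\alpha}\leq |S|^{\alpha}\cdot\frac{1}{|S|}\sum_{e_i\in S}e_i = |S|^{\alpha-1}\sum_{e_i\in S}e_i,
\end{equation*}
which is exactly the claimed inequality. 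This is the entire argument once the power-means inequality is invoked.

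If I instead wanted a self-contained proof not citing the power-means inequality, I would derive it from convexity via Jensen's inequality (or equivalently H\"older's inequality). The cleanest route is Jensen applied to the convex function $x \mapsto x^{\alpha}$ (convex since $\alpha > 1$): taking the uniform average $\frac{1}{|S|}\sum_i e_i^{1/\alpha}$ and applying $\left(\frac{1}{|S|}\sum_i e_i^{1/\alpha}\right)^{\alpha} \le \frac{1}{|S|}\sum_i \left(e_i^{1/\alpha}\right)^{\alpha} = \frac{1}{|S|}\sum_i e_i$ reproduces the displayed inequality directly, after which the same multiplication by $|S|^{\alpha}$ finishes it. The only genuine subtlety to check is that the hypothesis $\alpha > 1$ is what makes $x^{\alpha}$ convex and the power-mean inequality point in the stated direction; there is no real obstacle here, since this is a corollary of a standard inequality. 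Indeed the statement already advertises itself as such, so I expect the author's proof to be a one-line invocation essentially identical to the above.
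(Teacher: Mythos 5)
Your proof is correct and matches the paper's approach exactly: the paper simply declares the proposition to be a corollary of the generalized means inequality and gives no further details, and your argument is precisely the one-line working-out of that corollary (with the Jensen variant being an equivalent restatement). Nothing is missing.
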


Proposition~\ref{TechnicalPoisson} estimates the moments of Binomial random variables through the moments of Poisson random variables.

\begin{proposition}\label{TechnicalPoisson}
For any $\alpha\geq 1$, the function $f(x)=x^{\alpha}$ and parameter $a\in [0,1]$ we have
$$\mathbb{E}[f(B_{a})] \leq \mathbb{E}[f(P_{a})]$$
where $B_{a}$ is a sum of $n$ independent Bernoulli random variables, $\mathbb{E}[ B_{a}]=a$ and $P_{a}$ is a Poisson random variable with parameter $a$.
\end{proposition}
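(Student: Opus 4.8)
The plan is to realize the sum of Bernoulli variables through the random-set construction of Proposition~\ref{prop:split} with all constants equal to one, and then to view the Poisson distribution as the limit obtained by repeatedly splitting the Bernoulli probabilities into ever finer pieces, using Proposition~\ref{prop:split} as a monotonicity tool.

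First I would specialize Proposition~\ref{prop:split} by setting $e_1=e_2=\cdots=1$, so that $f(S)=|S|^{\alpha}$. Writing $B_a=\sum_{j=1}^n\xi_j$ with $\xi_j$ independent Bernoulli of mean $p_j$ and $\sum_j p_j=a\le 1$, and choosing element $j$ with probability $Y_j=p_j$, we get $|S|=B_a$ and hence $\mathbb{E}[f(S)]=\mathbb{E}[B_a^{\alpha}]$. Next I would read Proposition~\ref{prop:split} as the statement that splitting one probability $p_j$ into $p_j',p_j''$ with $p_j'+p_j''=p_j$ (duplicating the associated unit constant, so the hypothesis $e_n=e_{n+1}$ holds) can only increase $\mathbb{E}[f]$. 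Iterating, I would subdivide each original $p_j$ into $N$ equal pieces of size $p_j/N$, obtaining a sum $X_N$ of $nN$ independent Bernoulli variables of total mean $a$ and maximum individual mean $\max_j p_j/N\to 0$; since every split leaves the total mean unchanged and does not decrease the expectation, $\mathbb{E}[B_a^{\alpha}]\le\mathbb{E}[X_N^{\alpha}]$ for every $N$.

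Finally I would pass to the limit $N\to\infty$. By the Poisson limit theorem for triangular arrays (the law of rare events), the conditions $\sum_i p_i^{(N)}=a$ and $\max_i p_i^{(N)}\to 0$ force $X_N$ to converge in distribution to the Poisson random variable $P_a$ of parameter $a$. It then remains to upgrade this to convergence of the $\alpha$-th moment, $\mathbb{E}[X_N^{\alpha}]\to\mathbb{E}[P_a^{\alpha}]$, which together with the inequality above gives the claim.

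The hard part will be this last moment-convergence step, because $\alpha$ need not be an integer and convergence in distribution alone does not transfer fractional moments. I would resolve it by establishing uniform integrability of the family $\{X_N^{\alpha}\}$. Fixing an integer $\beta>\alpha$, the factorial moments satisfy $\mathbb{E}[X_N(X_N-1)\cdots(X_N-k+1)]=\sum_{i_1<\cdots<i_k}p_{i_1}^{(N)}\cdots p_{i_k}^{(N)}\le a^{k}\le 1$, so expanding ordinary moments in factorial moments via Stirling numbers yields $\sup_N\mathbb{E}[X_N^{\beta}]<\infty$. Since $\beta/\alpha>1$, this bound makes $\{X_N^{\alpha}\}$ uniformly integrable, and convergence in distribution together with uniform integrability delivers $\mathbb{E}[X_N^{\alpha}]\to\mathbb{E}[P_a^{\alpha}]$, completing the argument.
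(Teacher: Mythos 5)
Your proof is correct, but it follows a genuinely different route from the paper's. The paper's proof is a two-line reduction to an external black box: it pads $B_a$ with infinitely many Bernoulli variables of success probability zero and invokes the Berend--Tassa generalization of Hoeffding's theorem (stated in the paper as Proposition~\ref{Berend}, from \cite{BerendT10}) with $t=\infty$, which directly yields the Poisson bound. You instead build the result from the paper's own machinery: specializing Proposition~\ref{prop:split} to unit constants turns it into the statement that splitting a Bernoulli variable of mean $p_j$ into two of means summing to $p_j$ can only increase $\mathbb{E}[|S|^{\alpha}]$, and iterating this gives $\mathbb{E}[B_a^{\alpha}]\leq\mathbb{E}[X_N^{\alpha}]$ for the $N$-fold refinement; the law of rare events then identifies the distributional limit of $X_N$ as $P_a$, and your uniform-integrability argument (factorial-moment bounds giving $\sup_N\mathbb{E}[X_N^{\beta}]<\infty$ for an integer $\beta>\alpha$) correctly upgrades weak convergence to convergence of the fractional moment --- the step that would otherwise fail. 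In effect you re-prove the special case of Hoeffding's theorem that the paper imports, using the same splitting idea the paper only exploits inside the proof of Theorem~\ref{thm:approx}; this buys self-containedness at the cost of length and of standard but nontrivial limit-theoretic tools. One minor slip: the falling factorial moment equals $k!\sum_{i_1<\cdots<i_k}p_{i_1}^{(N)}\cdots p_{i_k}^{(N)}$, not the sum alone; since this is still at most $\left(\sum_i p_i^{(N)}\right)^k=a^k\leq 1$, your bound and the rest of the argument are unaffected.
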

\begin{proof}
To upper bound the expected value of $f(x)$, we will need the following probabilistic fact that was first proved by Hoeffding \cite{Hoeffding56}
for finite sum of Bernoulli random variables and was lately generalized for more general distributions by Berend and Tassa \cite{BerendT10}.

\begin{proposition}\emph{\cite{BerendT10}}\label{Berend}
Let $X=\sum_{i=1}^{t}X_i$ be the sum of $t$ (where $t$ is possibly equal to infinity) independent random variables,
$0\le X_i\le 1$ for $i=1,\dots, t$  and $\mu=\mathbb{E}[X]$.
For every convex function $f$,
$$\mathbb{E}[f(X)]\le \mathbb{E}[f(Y)]$$
where $Y$ is a binomial random variable with distribution $Y  \sim B(t,\mu/t)$
in case $t < \infty$, and a Poisson random variable with distribution $Y \sim P(\mu)$ otherwise.
\end{proposition}

We define a binomial random variable $B'_{a}$ as a sum of $B_{a}$ and an infinite number of Bernoulli random variables $Y'_j$ for $j=n+1,\dots,\infty$
such that $Pr(Y_j=1)=0$. Obviously, $\mathbb{E}[B'_{a}]=\mathbb{E}[B_{a}]=a$ and $\mathbb{E}[ f(B'_{a})]=\mathbb{E}[ f(B_{a})]$.
Since the function $f(x)$ is convex we can apply the Proposition \ref{Berend} with parameter $t=\infty$ and the statement follows.
\end{proof}

Proposition~\ref{prop:bounds} estimates moments of Poisson random variables with parameter $\lambda$
through the moments of Poisson random variables with parameter $1$.

\begin{proposition}\label{prop:bounds}
For any real $\alpha\geq 1$ and a Poisson random variable $P_{\lambda}$ with parameter $\lambda\geq 0$, we have:
\begin{itemize}
\item[(a)] If $0 \leq \lambda \leq 1$,
then $\mathbb{E}[P_{\lambda}^{\alpha}] \leq \lambda \mathbb{E}[P_{1}^{\alpha}]$.
\item[(b)] If $\lambda > 1$,
then $\mathbb{E}[P_{\lambda}^{\alpha}] \leq \lambda^{\alpha} \mathbb{E}[P_{1}^{\alpha}]$.
\end{itemize}
\end{proposition}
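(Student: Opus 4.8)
The plan is to work throughout with the function $g(\lambda):=\mathbb{E}[P_\lambda^\alpha]=\sum_{k\ge 0}k^\alpha\frac{\lambda^k e^{-\lambda}}{k!}$, so that $\mathbb{E}[P_1^\alpha]=g(1)=\tilde{B}_\alpha$ and $g(0)=0$. The single analytic tool I would establish first is the \emph{Poisson differentiation identity} $\frac{d}{d\lambda}\mathbb{E}[f(P_\lambda)]=\mathbb{E}[f(P_\lambda+1)]-\mathbb{E}[f(P_\lambda)]$, obtained by differentiating the defining series term by term (legitimate because $P_\lambda$ has finite moments of every order, so the series converges locally uniformly) and reindexing. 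I would also record the Poisson size-biasing identity $\mathbb{E}[P_\lambda f(P_\lambda)]=\lambda\,\mathbb{E}[f(P_\lambda+1)]$, proved the same way; taking $f(x)=x^{\alpha-1}$ gives the convenient form $g(\lambda)=\lambda\,\mathbb{E}[(P_\lambda+1)^{\alpha-1}]$.

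For part (a) I would prove that $g$ is convex on $[0,\infty)$ and then use $g(0)=0$. Applying the differentiation identity twice to $f(x)=x^\alpha$ yields $g''(\lambda)=\mathbb{E}[f(P_\lambda+2)-2f(P_\lambda+1)+f(P_\lambda)]$, the expected second forward difference of $x^\alpha$. Since $x^\alpha$ is convex for $\alpha\ge 1$, its second forward difference $f(x+2)-2f(x+1)+f(x)$ is non-negative at every integer $x\ge 0$, so $g''\ge 0$. Convexity together with $g(0)=0$ then gives, for $\lambda\in[0,1]$, the bound $g(\lambda)=g(\lambda\cdot 1+(1-\lambda)\cdot 0)\le \lambda g(1)+(1-\lambda)g(0)=\lambda g(1)$, which is exactly (a).

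For part (b) I would show that $\psi(\lambda):=g(\lambda)/\lambda^\alpha=\mathbb{E}[(P_\lambda/\lambda)^\alpha]$ is non-increasing on $[1,\infty)$; since $\psi(1)=g(1)$ this at once yields $g(\lambda)\le\lambda^\alpha g(1)$ for $\lambda>1$. Differentiating, $\psi'(\lambda)\le 0$ is equivalent to $\lambda g'(\lambda)\le \alpha g(\lambda)$. Here the size-biasing form $g(\lambda)=\lambda\,\mathbb{E}[(P_\lambda+1)^{\alpha-1}]$ is what makes the argument close: substituting it on the right and $g'(\lambda)=\mathbb{E}[(P_\lambda+1)^\alpha-P_\lambda^\alpha]$ on the left, then cancelling the common factor $\lambda$, reduces the claim to the pointwise inequality $(x+1)^\alpha-x^\alpha\le\alpha(x+1)^{\alpha-1}$ for integers $x\ge 0$. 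This in turn follows from the mean value theorem, $(x+1)^\alpha-x^\alpha=\alpha\xi^{\alpha-1}$ for some $\xi\in(x,x+1)$, combined with the monotonicity of $t\mapsto t^{\alpha-1}$ for $\alpha\ge 1$, which gives $\xi^{\alpha-1}\le(x+1)^{\alpha-1}$.

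I expect part (b) to be the main obstacle. The naive route of splitting $P_\lambda$ into independent Poisson pieces and applying Proposition~\ref{Prop:ConcaveExpectation} cleanly proves $g(n)\le n^\alpha g(1)$ for integer $n$ (take $n$ pieces of parameter $1$), but for non-integer $\lambda$ every admissible splitting is forced to contain a fractional piece whose contribution, bounded through part (a), overshoots $\lambda^\alpha g(1)$ because $\{\lambda\}^{1/\alpha}\ge\{\lambda\}$. The size-biasing identity is precisely the device that sidesteps this discretization gap and handles all real $\lambda\ge 1$ uniformly; pinning down that identity, and the routine justification of term-by-term differentiation of the defining series, are the only delicate points.
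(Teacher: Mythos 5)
Your proof is correct, but it takes a genuinely different route from the paper's on both parts. For part (a), the paper argues directly on the series: it observes that $e^{-1}\geq \lambda^{k-1}e^{-\lambda}$ for $k\geq 2$ and $\lambda\in[0,1]$, so each coefficient in $\mathbb{E}[P_1^\alpha]-\frac{1}{\lambda}\mathbb{E}[P_\lambda^\alpha]$ is non-negative for $k\geq 2$, then lower-bounds $k^\alpha$ by $k$ and reduces the whole expression to a comparison of Poisson means; your convexity argument via $g''(\lambda)=\mathbb{E}\left[f(P_\lambda+2)-2f(P_\lambda+1)+f(P_\lambda)\right]$ together with $g(0)=0$ is cleaner and gives a structural fact (convexity of $\lambda\mapsto\mathbb{E}[P_\lambda^\alpha]$) rather than a one-off computation. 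For part (b), your diagnosis of the naive splitting is accurate, but the paper does not in fact get stuck there: for rational $\lambda=A/B$ it writes $P_\lambda=\sum_{i=1}^A Y_i$ with $Y_i$ independent Poisson$(1/B)$, and then decomposes this sum as $\sum_{S\subseteq[A],|S|=B} Y_S$ where $Y_S=\bigl(\sum_{i\in S}Y_i\bigr)/\binom{A-1}{B-1}$; these $\binom{A}{B}$ summands are \emph{dependent} (overlapping), but each $\sum_{i\in S}Y_i$ is Poisson$(1)$, and Minkowski's norm inequality requires no independence, yielding $\mathbb{E}[P_\lambda^\alpha]^{1/\alpha}\leq \frac{\binom{A}{B}}{\binom{A-1}{B-1}}\mathbb{E}[P_1^\alpha]^{1/\alpha}=\lambda\,\mathbb{E}[P_1^\alpha]^{1/\alpha}$, followed by a limiting argument for irrational $\lambda$. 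Your size-biasing route ($g(\lambda)=\lambda\,\mathbb{E}[(P_\lambda+1)^{\alpha-1}]$, monotonicity of $g(\lambda)/\lambda^\alpha$ via the pointwise bound $(x+1)^\alpha-x^\alpha\leq\alpha(x+1)^{\alpha-1}$) buys a uniform treatment of all real $\lambda\geq 1$ with no rationality or limiting step, and proves the stronger statement that $\mathbb{E}[(P_\lambda/\lambda)^\alpha]$ is non-increasing; the paper's argument buys freedom from any differentiation or term-by-term regularity justification, using only Poisson additivity and the triangle inequality for $L^\alpha$ norms. Both are complete; just make sure you state the local uniform convergence justifying term-by-term differentiation (as you indicate), since that is the only analytic debt your approach incurs that the paper's avoids.
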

\begin{proof}
Recall that $\mathbb{E}[P_{\lambda}^{\alpha}]= \sum_{k=0}^{\infty}k^{\alpha}\frac{\lambda^k e^{-\lambda}}{k!}$.

\begin{itemize}
\item[(a)] Note that $e^{-(1-\lambda)}\geq 1-(1-\lambda)=\lambda\geq \lambda^{k-1}$ for $k\geq 2$ and $0 \leq \lambda \leq 1$.
Therefore, $e^{-1}\geq \lambda^{k-1}e^{-\lambda}$ for all $k\geq 2$.
For $\lambda=0$ the statement of the Lemma is trivial. Assume $\lambda>0$, then we derive
\begin{eqnarray*}
\mathbb{E}[P_{1}^{\alpha}]-\frac{1}{\lambda}\mathbb{E}[P_{\lambda}^{\alpha}]&=&\sum_{k=0}^{\infty}k^{\alpha}\frac{e^{-1}-\lambda^{k-1} e^{-\lambda}}{k!}\\
&=&(e^{-1}-e^{-\lambda})+\sum_{k=2}^{\infty}k^{\alpha}\frac{e^{-1}-\lambda^{k-1} e^{-\lambda}}{k!}\\
&\geq& (e^{-1}-e^{-\lambda})+\sum_{k=2}^{\infty}k\frac{e^{-1}-\lambda^{k-1} e^{-\lambda}}{k!}\\
&=&\sum_{k=0}^{\infty}k\frac{e^{-1}}{k!} -\frac{1}{\lambda}\sum_{k=0}^{\infty}k\frac{\lambda^{k} e^{-\lambda}}{k!}\\
&=&1-\frac{1}{\lambda}\cdot\lambda=0.
\end{eqnarray*}
\item[(b)] For the case where $\lambda > 1$, we will use two basic facts.
The first fact is that given a Poisson random variable $X_1$ with parameter $\lambda_1$ and a Poisson random variable $X_2$ with parameter $\lambda_2$
that are mutually independent, then a random variable $X_1+X_2$ is a Poisson random variable with parameter $\lambda_1+\lambda_2$.
The second fact is that for any random variable $X$ the quantity $\mathbb{E}[X^p]^{1/p}$ defines a norm and therefore satisfies the triangle inequality
(Minkowski's norm inequality), i.e. $||X+Y||_p\le ||X||_p+||Y||_p$.

Assume that $\lambda=\frac{A}{B}$ is a rational number, $A,B\in Z_+$ and $A>B\geq 1$.
Let $X$ be a Poisson random variable with parameter $\lambda$ and $Y_1,\dots, Y_A$ be independent Poisson random variables with parameter $1/B$.
In addition, let $Y_S$ be a random variable
\begin{equation*}
Y_S=\frac{\sum_{i\in S}Y_i}{{A-1 \choose B-1}}
\end{equation*}
Then
\begin{equation*}
X=\sum_{i=1}^AY_i=\sum_{S\subseteq [A]:|S|=B} Y_S
\end{equation*}
where $[A]=\{1,\dots,A\}$.
Let $P_1$ be a Poisson random variable with parameter $1$.
Then applying the triangle inequality we obtain
\begin{equation*}
\mathbb{E}[X^{\alpha}]^{1/\alpha}\leq \sum_{S\subseteq [A]:|S|=B} \mathbb{E}[Y_S^\alpha]^{1/\alpha}
=\frac{{A \choose B}}{{A-1 \choose B-1}}\mathbb{E}[P_1^\alpha]^{1/\alpha}
=\lambda \mathbb{E}[P_1^\alpha]^{1/\alpha}
\end{equation*}
which implies the inequality in the statement of the lemma for any rational value of $\lambda>1$.

To derive the inequality for any real $\lambda>1$ we just need to apply the standard limiting argument,
i.e., any real is a limit of rationals and the inequality holds for each of these rational values.
Therefore, the inequality must hold for the real value of $\lambda$.
\end{itemize}
\end{proof}

\section{Heterogeneous Multiprocessor without Migrations}
\label{section:non-migration}

In this section we consider the case where the migration of jobs is not permitted, but their preemption is allowed.
The corresponding homogeneous problem is known to be $\mathcal{NP}$-hard even if all jobs have common release dates and deadlines \cite{AlbersMS07}.
We propose an approximation algorithm by formulating the problem as a configuration integer program (IP) with an exponential number of variables
and a polynomial number of constraints.
Given an optimal solution for the configuration LP relaxation, we apply randomized rounding to get a feasible schedule for our problem.
In order to get a polynomial-time algorithm,
we present another (compact) formulation of our problem with a polynomial number of variables and constraints
and we show that the relaxations of the two formulations are equivalent.

\subsection{Linear Programming Relaxation}
\label{section:lpr}

In order to formulate our problem as a configuration IP we need to discretize the time.
In the following lemma we assume that the release dates and the deadlines of all jobs in all processors are integers.

\begin{lemma} \label{lemma:time-discrete}
There is a feasible schedule with energy consumption at most $((1+\frac{\varepsilon}{1-\varepsilon}) (1+\frac{2}{n-2}))^{\alpha} \cdot OPT$
in which each piece of each job $j \in \mathcal{J}$ ($j$ is executed on processor $i \in \mathcal{P}$)
starts and ends at a time point $r_{i,j}+ k\frac{\varepsilon}{n^3}(d_{i,j}-r_{i,j})$, where $k\geq0$ is an integer and $\varepsilon\in(0,1)$.
\end{lemma}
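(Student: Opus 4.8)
The plan is to start from an optimal schedule $\mathcal{S}^*$ and massage it into a grid-aligned one at a controlled energy cost. First I would invoke convexity of the power functions: for the fixed job-to-processor assignment of $\mathcal{S}^*$, the optimal single-processor preemptive algorithm of \cite{YaoDS95} runs every job at a single constant speed, so I may assume each job $j$ is processed at a constant speed $s_j$ on its processor $i$ during a union of pieces inside its alive interval $[r_{i,j},d_{i,j}]$. I would also bound the number of pieces per processor: since all speeds are constant, a feasible ordering is obtained by the earliest-deadline-first rule, and preemptions occur only at release dates, so each processor carries only $O(n)$ pieces in total. For each job I fix its private grid $G_j=\{r_{i,j}+k\delta_j\}$ with spacing $\delta_j=\frac{\varepsilon}{n^3}(d_{i,j}-r_{i,j})$; note the grid lies inside the window, so snapping a piece's endpoints to $G_j$ can never by itself violate a release date or a deadline.

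The energy loss would be produced by two successive speed-ups. In the first step I would speed up every job by the factor $1+\frac{\varepsilon}{1-\varepsilon}=\frac{1}{1-\varepsilon}$; this shrinks each job's total busy time to a $(1-\varepsilon)$-fraction, frees an $\varepsilon$-fraction of slack around its pieces, and multiplies the energy by at most $\left(\tfrac{1}{1-\varepsilon}\right)^{\alpha-1}\le\left(\tfrac{1}{1-\varepsilon}\right)^{\alpha}$, because running fixed work over a $1/c$ fraction of the time scales energy by $c^{\alpha-1}$. In the second step I would snap the two endpoints of every piece to the nearest points of the owner's grid $G_j$, pushing the $O(\delta_j)$ displacements into the slack created above so that the pieces stay pairwise disjoint; the additional speed-up needed to recover the processing time lost to snapping is bounded by $1+\frac{2}{n-2}=\frac{n}{n-2}$, giving the second factor. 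Composing the two speed-ups and raising to the power $\alpha$ yields the claimed bound $\big((1+\frac{\varepsilon}{1-\varepsilon})(1+\frac{2}{n-2})\big)^{\alpha}\cdot OPT$.

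The step I expect to be the main obstacle is the snapping itself, because the grid spacing $\delta_j$ is proportional to the window length $d_{i,j}-r_{i,j}$, which may be far larger than a job's processing time; consequently a single piece can be much shorter than $\delta_j$, and rounding it inward would destroy it and send its speed to infinity. The technical heart is therefore a charging argument on each processor that simultaneously aligns all $O(n)$ piece boundaries to their respective grids while maintaining two invariants: every job retains at least a $\frac{n-2}{n}(1-\varepsilon)$-fraction of its original processing time (so the overall speed-up stays within $\frac{1}{1-\varepsilon}\cdot\frac{n}{n-2}$), and the aligned pieces remain disjoint by absorbing every displacement into the slack of the first step. Once these two invariants are verified, the energy estimate follows at once from the convexity computation above, so I would concentrate the effort on making the slack-accounting and the short-piece handling precise.
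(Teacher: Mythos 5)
Your proposal correctly isolates the crux---a job's processing time can be far smaller than its grid spacing $\delta_j=\frac{\varepsilon}{n^3}(d_{i,j}-r_{i,j})$---but it does not resolve it, and the resolution you sketch cannot work as stated. The slack you create in your first step comes from speeding up job $j$ itself, so it has total length $\varepsilon$ times $j$'s own processing time and sits around $j$'s existing pieces; when that processing time is below $\delta_j$, this slack is much smaller than a single grid slot and cannot absorb displacements of order $\delta_j$. Worse, your first invariant (each job retains a $\frac{n-2}{n}(1-\varepsilon)$-fraction of its \emph{original} processing time) is unachievable in any grid-aligned schedule for such a job: a nonempty grid-aligned piece has length at least $\delta_j$, so a short job must be \emph{slowed down} to fill at least one whole slot, and nothing in your accounting shows that this slot can be kept disjoint from the other jobs on the same processor. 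Finally, the factor $1+\frac{2}{n-2}$ is asserted but never derived; it requires knowing that the slots lost to rounding (at most two per piece, hence $O(n)$ per job under EDF) are a vanishing fraction of the slots the job occupies, which presupposes that every job already spans on the order of $n^2$ slots---exactly what your construction fails to guarantee.

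The missing idea, which is the heart of the paper's proof, is a preliminary stretching step whose slack scales with the \emph{window} rather than with the job's busy time. Since release dates and deadlines are integers, cut time into unit slots; in each unit slot increase all speeds by $1+\frac{\varepsilon}{1-\varepsilon}$ to create an idle period of length $\varepsilon$, and reserve $\frac{\varepsilon}{n}$ of it for each of the at most $n$ alive jobs. Each job $j$ is then slowed down so that, in addition to its original execution periods, it uses all of its reserved slices throughout $(r_{i,j},d_{i,j}]$; this makes its processing time at least $\frac{\varepsilon}{n}(d_{i,j}-r_{i,j})=n^2\delta_j$, i.e., every job now spans at least $n^2$ grid slots, and slowing down only decreases energy, so the only cost so far is the factor $(1+\frac{\varepsilon}{1-\varepsilon})^{\alpha}$. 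After re-scheduling each processor by EDF (at most $n$ preemptions per job, hence at most $2n$ partially covered slots), one discards the partially covered slots and speeds up by $\frac{n^2}{n^2-2n}=1+\frac{2}{n-2}$, which is where that factor actually comes from. Your outline, which keeps each job's footprint where the optimum placed it, has no counterpart to this stretching step, and without it the two invariants on which your argument rests cannot be established.
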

\begin{proof}

We will first transform an optimal schedule $\mathcal{S}^*$ to a feasible schedule $\mathcal{S}$ in which the execution time of each job $j \in \mathcal{J}$
executed on processor $i \in \mathcal{P}$ is at least $\frac{\varepsilon}{n}(d_{i,j}-r_{i,j})$.

As the release dates and the deadlines are integers, we can divide the time into unit length slots.
We can get now the schedule $\mathcal{S}$ from $\mathcal{S}^*$ as follows:
For each unit slot we increase the processors' speeds such that to create an idle period of length $\varepsilon$.
This can be done by increasing the speeds by a factor of $1+\frac{\varepsilon}{1-\varepsilon}$,
and hence the total energy consumption in $\mathcal{S}$ is increased by a factor of $(1+\frac{\varepsilon}{1-\varepsilon})^{\alpha}$.
For each job $j \in \mathcal{J}$,
we reserve an $\frac{\varepsilon}{n}$ period to each unit slot in $(r_{i,j},d_{i,j}]$ on the processor in which $j$ was executed in $\mathcal{S}^*$.
In $\mathcal{S}$, we decrease the speed of $j$ such that its total work to be executed during the periods where $j$ was executed in $\mathcal{S}^*$
and the additional $d_{i,j}-r_{i,j}$ reserved periods.
Therefore, in the final schedule the processing time of each job $j \in \mathcal{J}$ is at least $\frac{\varepsilon}{n}(d_{i,j}-r_{i,j})$.
After this transformation we apply the Earliest Deadline First (EDF) policy to each processor separately
with respect to the set of jobs assigned on this processor in $\mathcal{S}^*$ and the speeds defined above.
This ensures that we have a schedule with at most $n$ preemptions,
as in EDF a job may be interrupted only when another job is released.

Next, we transform $\mathcal{S}$ to a new schedule $\mathcal{S}'$ such that to satisfy the statement of the lemma.
For each job $j \in \mathcal{J}$ which is executed on the processor $i \in \mathcal{P}$,
we split the interval $(r_{i,j},d_{i,j}]$ into slots of length $\frac{\varepsilon}{n^3}(d_{i,j}-r_{i,j})$,
i.e., we partition $(r_{i,j},d_{i,j}]$ into intervals of the form
$(r_{i,j}+k\frac{\varepsilon}{n^3}(d_{i,j}-r_{i,j}),r_{i,j}+ (k+1)\frac{\varepsilon}{n^3}(d_{i,j}-r_{i,j})]$, where $k \geq 0$ is an integer.
As the processing time of $j$ in $\mathcal{S}$ is at least $\frac{\varepsilon}{n}(d_{i,j}-r_{i,j})$,
the execution of $j$ has been partitioned into at least $n^2$ slots.
In each of these slots, the job $j$ either is executed during the whole slot or is executed into a fraction of it.
As we have applied the EDF policy, the job $j$ is preempted at most $n$ times, and hence at most $2n$ of these slots are not completely covered by $j$,
since for each preempted piece of $j$ at most two slots may not be completely covered by it,
i.e., the first and the last slot of its execution.
We can modify the schedule $\mathcal{S}$ and get the schedule $\mathcal{S}'$
in which the job $j$ is executed only to the slots where it was entirely executed in $\mathcal{S}$.
The number of these slots is at least $n^2-2n$.
Thus, we have to increase the speed of $j$ by a factor of $1+\frac{2}{n-2}$,
and hence the energy is increased by a factor of $(1+\frac{2}{n-2})^{\alpha}$.
By taking into account that $\mathcal{S}$ is a factor of $(1+\frac{\varepsilon}{1-\varepsilon})^{\alpha}$ far from the optimal, the lemma follows.
\end{proof}

Let $\mathcal{S}$ be a schedule that satisfies Lemma~\ref{lemma:time-discrete}
and let $j\in \mathcal{J}$ be a job executed on the processor $i \in \mathcal{P}$ in $\mathcal{S}$.
The above lemma implies that the interval $(r_{i,j},d_{i,j}]$ can be partitioned into polynomial,
with respect to $n$ and $1/\varepsilon$, number of equal length slots.
In each of these slots, $j$ either is executed during the whole slot or is not executed at all.
In what follows we consider schedules that satisfy Lemma~\ref{lemma:time-discrete}.

A configuration $c$ is a schedule for a single job on a single processor.
Specifically, a configuration determines the slots, with respect to Lemma~\ref{lemma:time-discrete},
during which one job is executed.
Given a configuration $c$ for a job $j \in \mathcal{J}$, we can define the execution time of $j$
that is equal to the number of slots in $c$ multiplied by the length of the slot.
Due to the convexity of the speed-to-power function, in a minimum energy schedule that satisfies Lemma~\ref{lemma:time-discrete},
the job $j$ runs at a constant speed $s_j$.
Hence, $s_j$ is equal to the work of $j$ over its execution time.
Let $\mathcal{C}_{ij}$ be the set of all possible feasible configurations for a job $j \in \mathcal{J}$ in a processor $i \in \mathcal{P}$.

In order to ensure the feasibility of our schedule we need to further partition the time, by merging the slots for all jobs.
Given a processor $i \in \mathcal{P}$, consider the time points of all jobs of the form $r_{i,j}+k\frac{\varepsilon}{n^3}(d_{i,j}-r_{i,j})$
as introduced in Lemma~\ref{lemma:time-discrete}.
Let $t_{i,1}, t_{i,2},\ldots,t_{i,\ell_i}$ be the ordered sequence of these time points.
Consider now the intervals $(t_{i,p},t_{i,p+1}]$, $1 \leq p \leq \ell_i-1$.
In a schedule that satisfies Lemma~\ref{lemma:time-discrete},
in each such interval either there is exactly one job that is executed during the whole interval or the interval is idle.
Note also that these intervals might not have the same length.
Let $\mathcal{I}$ be the set of all these intervals for all processors.

We introduce the binary variable $x_{i,j,c}$ that is equal to one
if the job $j \in \mathcal{J}$ is entirely executed on the processor $i \in \mathcal{P}$ according to the configuration $c$, and zero otherwise.
Note that, given the configuration and the processor $i$ where the job $j$ is executed,
we can compute the energy consumption $E_{i,j,c}$ for the execution of $j$.
For ease of notation, we say $I \in (i,j,c)$ if the interval $I \in \mathcal{I}$ is included in the configuration $c$
of processor $i \in \mathcal{P}$ for the job $j \in \mathcal{J}$,
that is there is a slot $(r_{i,j}+k\frac{\varepsilon}{n^3}(d_{i,j}-r_{i,j}),r_{i,j}+(k+1)\frac{\varepsilon}{n^3}(d_{i,j}-r_{i,j})]$ in $c$ that contains $I$.
\begin{eqnarray}
\min \sum_{i,j,c} E_{i,j,c} \cdot x_{i,j,c} \nonumber\\
\sum_{i,c}x_{i,j,c}=1 & \hspace{2cm} \forall j\in\mathcal{J} \label{c3}\\
\sum_{(i,j,c)\ni I }x_{i,j,c}\leq 1 & \hspace{2cm} \forall I\in\mathcal{I} \label{c4}\\
x_{i,j,c}\in \{0,1\} & \hspace{2cm} \forall i\in\mathcal{P}, j\in\mathcal{J}, c\in\mathcal{C}_{ij} \label{c5}
\end{eqnarray}
Inequality~(\ref{c3}) enforces that each job is entirely executed according to exactly one configuration.
Inequality~(\ref{c4}) ensures that at most one job is executed in each interval $(t_{i,p},t_{i,p+1}]$, $1 \leq p \leq \ell_i-1$.

We next relax the constraints~(\ref{c5}) such that $x_{i,j,c}\geq0$.
Since the structure of this LP is quite simple we can define an equivalent compact LP relaxation with polynomial number of constraints and variables.
We describe how to do it in Section~\ref{section:compact}.
For now we assume that we can find an optimal solution of our configuration LP in polynomial time.

\subsection{Randomized Rounding}
\label{section:rr}

Now, we show how to apply randomized rounding to get an approximation algorithm for our problem.
Recall that, by definition, an interval $I\in\mathcal{I}$ corresponds to a single processor $i\in\mathcal{P}$.
Our algorithm follows.
\begin{algorithm}[h!] \nonumber
\caption{}
\label{algo}
\begin{algorithmic}[1]
\STATE Solve the configuration LP relaxation.
\STATE For each job $j \in \mathcal{J}$, choose a configuration at random with probability $x_{i,j,c}$.
\STATE Let $w_j(I)$ be the amount of work executed for job $j$ during the interval $I\in\mathcal{I}$ according to its chosen configuration and $i$ be the associated processor to the interval $I$.
\STATE Set $i$'s speed during $I$ as if $\sum_{j\in\mathcal{J}}w_j(I)$ units of work are executed with constant speed during the entire $I$.
\end{algorithmic}
\end{algorithm}

\begin{theorem}\label{thm:approx}
Assume that $\alpha_i\geq 1$ for all $i=1,\dots,m$.
Algorithm~\ref{algo} achieves an approximation ratio of $((1+\frac{\varepsilon}{1-\varepsilon}) (1+\frac{2}{n-2}))^{\alpha}\tilde{B}_{\alpha}$
for the heterogeneous multiprocessor preemptive speed-scaling problem without migrations in time polynomial to $n$ and to $1/\varepsilon$,
where $\alpha=\max_{i \in \mathcal{P}} \alpha_i$ and $\varepsilon\in(0,1)$.
\end{theorem}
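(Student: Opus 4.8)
The plan is to combine Lemma~\ref{lemma:time-discrete} (which controls the discretization loss) with a per-interval analysis of the randomized rounding (which contributes the factor $\tilde{B}_{\alpha}$). First I would establish that the optimum of the configuration LP is at most $((1+\frac{\varepsilon}{1-\varepsilon})(1+\frac{2}{n-2}))^{\alpha}\cdot OPT$: the schedule produced by Lemma~\ref{lemma:time-discrete} runs every job in whole slots and keeps at most one job per processor at any time, so it is a feasible integral solution of the IP~(\ref{c3})--(\ref{c5}), and relaxing integrality only lowers the optimum. Thus it remains to show that the expected energy of the schedule returned by Algorithm~\ref{algo} is at most $\tilde{B}_{\alpha}$ times the LP optimum.

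I would decompose the analysis interval by interval. Fix an interval $I\in\mathcal{I}$ on processor $i$ and note that, by Step~4, the energy charged to $I$ equals $(\sum_{j}w_j(I))^{\alpha_i}/|I|^{\alpha_i-1}$, while the contribution of $I$ to the LP objective is $LP_I=\sum_{j}\sum_{c\ni I}x_{i,j,c}\,w_{j,c}(I)^{\alpha_i}/|I|^{\alpha_i-1}$. Writing $Z_j$ for the random work that job $j$ places in $I$, these are $\mathbb{E}[(\sum_j Z_j)^{\alpha_i}]/|I|^{\alpha_i-1}$ and $\sum_j\mathbb{E}[Z_j^{\alpha_i}]/|I|^{\alpha_i-1}$, so the theorem reduces to the per-interval moment inequality
\begin{equation*}
\mathbb{E}\Big[\big(\textstyle\sum_j Z_j\big)^{\alpha_i}\Big]\leq \tilde{B}_{\alpha_i}\sum_j \mathbb{E}[Z_j^{\alpha_i}],
\end{equation*}
where the $Z_j$ are independent, each $Z_j$ equals $w_{j,c}(I)$ with probability $x_{i,j,c}$ (and $0$ otherwise), and constraint~(\ref{c4}) guarantees $\sum_j\Pr[Z_j>0]=\sum_{j,c\ni I}x_{i,j,c}\leq 1$.

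To prove this inequality I would pass through three reductions. First, each $Z_j$ is a mutually exclusive (categorical) choice among the configurations of job $j$; replacing it by \emph{independent} Bernoulli selections of those configurations can only increase $\mathbb{E}[(\sum_j Z_j)^{\alpha_i}]$, because for fixed background work $A$ the map $s\mapsto(A+s)^{\alpha_i}$ is convex, hence supermodular, giving $(A+v_1+v_2)^{\alpha_i}-(A+v_1)^{\alpha_i}-(A+v_2)^{\alpha_i}+A^{\alpha_i}\geq 0$, so swapping the jobs to independent one at a time only helps. Second, I would Poissonise: splitting every independent Bernoulli item into $M$ equal copies of probability $(\text{prob})/M$ and letting $M\to\infty$ turns the selection counts into independent Poisson variables, and by Proposition~\ref{prop:split} each split only increases the expectation (Proposition~\ref{TechnicalPoisson} validates the same-value limit), bounding the expression by $\mathbb{E}[(\sum_\ell v_\ell N_\ell)^{\alpha_i}]$ with $N_\ell$ independent Poisson of rates summing to $\mu\leq 1$. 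Third, since the rates sum to $\mu\le1$, the superposition is a single rate-$\mu$ Poisson process with i.i.d.\ marks; conditioning on the number $N\sim P_{\mu}$ of points and applying Proposition~\ref{prop:meanInequality} to the marks gives $\mathbb{E}[(\sum_m V_m)^{\alpha_i}]\leq \mathbb{E}[N^{\alpha_i}]\,\mathbb{E}[V^{\alpha_i}]$, and Proposition~\ref{prop:bounds}(a) bounds $\mathbb{E}[P_{\mu}^{\alpha_i}]\leq \mu\,\tilde{B}_{\alpha_i}$, which together with $\mathbb{E}[V^{\alpha_i}]=\frac1\mu\sum_j\mathbb{E}[Z_j^{\alpha_i}]$ yields exactly the factor $\tilde{B}_{\alpha_i}$.

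Finally I would assemble the pieces. Summing the per-interval bound over all $I$ and using that $\tilde{B}_{\beta}$ is nondecreasing in $\beta$, so $\tilde{B}_{\alpha_i}\leq\tilde{B}_{\alpha}$, gives $\mathbb{E}[E(\mathcal{S})]\leq\tilde{B}_{\alpha}\sum_I LP_I=\tilde{B}_{\alpha}\cdot(\text{LP optimum})\leq ((1+\frac{\varepsilon}{1-\varepsilon})(1+\frac{2}{n-2}))^{\alpha}\tilde{B}_{\alpha}\cdot OPT$. Feasibility of the output is immediate: each job is placed on one processor via a single configuration within its alive intervals (respecting release dates, deadlines and non-migration), and within each interval the jobs sharing it are time-shared at the common speed, which fits exactly since their total work equals the speed times $|I|$. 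Polynomiality follows because $|\mathcal{I}|$ is polynomial in $n$ and $1/\varepsilon$ and the LP is solved through the compact formulation of Section~\ref{section:compact}. I expect the main obstacle to be the per-interval moment inequality itself, namely collapsing the heterogeneous, different-valued contributions to a single Poisson-$1$ moment (the generalized Bell number) despite the values $w_{j,c}(I)$ varying across jobs and configurations; the supermodularity reduction and the marked-Poisson conditioning are the delicate points.
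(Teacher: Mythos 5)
Your proposal is correct in substance and shares the paper's overall frame---the per-interval accounting, the reduction to a moment inequality for selections whose probabilities sum to at most $1$ by constraint~(\ref{c4}), and the finishing combination of Proposition~\ref{prop:bounds}(a) with Lemma~\ref{lemma:time-discrete}---but the core of your moment bound follows a genuinely different route. The paper's first move is Proposition~\ref{Prop:ConcaveExpectation} (Minkowski): conditioned on which jobs appear in $I$, each job's random energy contribution is replaced by its expectation, which kills the within-job categorical randomness at once; after that all values are deterministic and the chain stays finite: Proposition~\ref{prop:split} (equal-value splitting at rational probabilities), Proposition~\ref{prop:meanInequality} to pass to a Binomial moment, then Propositions~\ref{TechnicalPoisson} and~\ref{prop:bounds}(a). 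You never invoke Proposition~\ref{Prop:ConcaveExpectation}: you keep the distinct values $w_{j,c}(I)$ throughout, convert the categorical choice into independent per-configuration Bernoullis via supermodularity (this is not Proposition~\ref{prop:split} as stated, which requires the split pieces to carry equal values, but your inequality $(A+v_1+v_2)^{\alpha_i}-(A+v_1)^{\alpha_i}-(A+v_2)^{\alpha_i}+A^{\alpha_i}\geq 0$ is the same convexity computation as in its proof, and the one-configuration-at-a-time peeling is sound), then Poissonize and use the compound-Poisson (marked point process) identity, conditioning on the number of marks so that $\tilde{B}_{\alpha_i}$ appears as $\mathbb{E}[N^{\alpha_i}]$ with Proposition~\ref{prop:meanInequality} applied to the i.i.d.\ marks. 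Both routes yield the same constant; the paper's argument stays entirely finite and reuses its propositions verbatim, while yours is more probabilistically transparent about where the Poisson moment comes from.

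The one step you should shore up is the Poissonization. Proposition~\ref{TechnicalPoisson} as stated compares $\mathbb{E}[f(B_a)]$ and $\mathbb{E}[f(P_a)]$ only for $f(x)=x^{\alpha}$ applied to the raw count, so it does not by itself ``validate the same-value limit'' when items carry different values $v_\ell$ and sit on top of an independent background term; moreover, passing from the $M$-fold splits to the Poisson limit requires a moment-convergence (uniform integrability) argument that you do not supply. The cleanest repair avoids limits altogether: replace the Bernoulli count of each item $\ell$ by a Poisson count of the same mean one item at a time, conditioning on all other items (whose contribution is a constant $A$ under the conditioning) and applying Proposition~\ref{Berend}---which is stated for every convex function---to the convex map $x\mapsto (A+v_\ell x)^{\alpha_i}$. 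After all items are replaced, your superposition and conditioning step finishes the proof exactly as you wrote it.
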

\begin{proof}
For each interval $I \in \mathcal{I}$, we estimate the expected energy consumption during $I$.
So, in the remainder of the proof, we fix such an interval (and processor).

Initially, the algorithm computes an optimal solution for the relaxed LP.
For each job $j \in \mathcal{J}$, let $n_j$ be the number of the non-zero $x_{i,j,c}$ variables such that $I\in(i,j,c)$.
Note that, every such variable corresponds to some configuration $c$ such that if the job $j$ is executed according to $c$, then it must be executed during $I$.
For notational convenience, let $X_{j,k}$ be the $k$-th, $1 \leq k \leq n_j$, of these non-zero variables and $s_{j,k}$ be the corresponding speed.
The probability that the job $j$ is executed during $I$ in the algorithm's schedule is $Y_j=\sum_{k=1}^{n_j}X_{j,k}$.

If the job $j$ is entirely executed according to the configuration which corresponds to the variable $X_{j,k}$, then its energy consumption is $e_{j,k}=|I|s_{j,k}^{\alpha_i}$, during $I$.
The energy consumption achieved by the optimal
solution of the LP relaxation during $I$ is
$LP^*_I=\sum_{j=1}^n\sum_{k=1}^{n_j}e_{j,k}X_{j,k}$.

Assume that the
randomized rounding assigns exactly the jobs in the set $S$ to be
processed during the interval $I$. The probability of such an event
is $Pr(S)=\prod_{j\in S}Y_j\prod_{j\in\mathcal{J}\setminus
S}(1-Y_j)$. Let $E(S)$ be the expected energy consumption during $I$
under the condition that exactly the jobs in the set $S$ are
executed during $I$. Then, the expected energy consumption $E_I$ of
our algorithm during $I$ can be expressed as follows:
\begin{equation*}
E_I=\sum_{S\subseteq\mathcal{J}}E(S)\prod_{j\in S}Y_j\prod_{j\in\mathcal{J}\setminus S}(1-Y_j)
\end{equation*}

We, now, estimate $E(S)$.
Let $U(S)$ be the set of all combinations of pairs $(j,k)$ that we can choose in order to schedule exactly the jobs in set $S$, $S\subseteq\mathcal{J}$, during $I$.
If the algorithm schedules the jobs in $S$ during $I$ according to the configurations in $U$, where $U\subseteq U(S)$, then the total energy consumption during $I$ in the algorithm's schedule is $|I|\left(\sum_{(j,k)\in U}s_{j,k}\right)^{\alpha_i}$.
So,
\begin{equation*}
E(S)=\sum_{U\subseteq U(S)}\left(\prod_{(j,k)\in
U}\frac{X_{j,k}}{Y_j}\right)|I|\left(\sum_{(j,k)\in
U}s_{j,k}\right)^{\alpha_i}
\end{equation*}
For each job $j\in S$, we denote by $\tilde{e}_j$ a random variable which takes the value $e_{j,k}$ with probability $\frac{X_{j,k}}{Y_j}$.
Then, we have that
\begin{equation*}
E(S)=\mathbb{E}\left[\left(\sum_{j\in S}\tilde{e}_j^{1/\alpha_i}\right)^{\alpha_i}\right]
\end{equation*}
By Proposition \ref{Prop:ConcaveExpectation},
\begin{equation*}
E(S)\le \left(\sum_{j\in S}\mathbb{E}\left[\tilde{e}_j\right]^{1/\alpha_i}\right)^{\alpha_i}
\end{equation*}
We set $e_j=\mathbb{E}\left[\tilde{e}_j\right]$.
Therefore, the expected energy consumption during $I$ can be upper bounded as follows
\begin{equation*}
E_I\le \sum_{S\subseteq\mathcal{J}} \left(\sum_{j\in S
}e_j^{1/\alpha_i}\right)^{\alpha_i} \prod_{j\in
S}Y_j\prod_{j\in\mathcal{J}\setminus S}(1-Y_j)
\end{equation*}

We can assume that there exists a sufficiently large $Q \in \mathbb{N}$ such that $Y_j=\frac{q_j}{Q}$, $1 \leq j \leq n$, for some $q_j \in \mathbb{N}$
(we don't make any assumptions on the encoding length of these numbers, we use them only for analysis purposes) since these numbers come from solving an LP with rational coefficients.
Let $Y=1/Q$ and $q=\sum_{j=1}^nq_j$. Note that $q\le Q$. By applying the Proposition~\ref{prop:split} iteratively, we split   $Y_j$ into smaller pieces and we get that
\begin{eqnarray*}
E_I & \leq & \sum_{S\subseteq\{1,2,\ldots,q\}}
\left(\sum_{j\in S }e_j^{1/\alpha_i}\right)^{\alpha_i} Y^{|S|} (1-Y)^{q-|S|}\\
\end{eqnarray*}

By using Proposition \ref{prop:meanInequality} we get

\begin{eqnarray*}
E_I & \leq & \sum_{S\subseteq\{1,2,\ldots,q\}}|S|^{\alpha_i-1}\left(\sum_{j\in S }e_j\right) Y^{|S|} (1-Y)^{q-|S|}\\
 & = & \sum_{k=1}^q \sum_{S\subseteq\{1,2,\ldots,q\}, |S|=k}\left(\sum_{j\in S }e_j\right) k^{\alpha_i-1} Y^k (1-Y)^{q-k}\\
\end{eqnarray*}
By changing the order of the sums in the above inequality and given that ${q-1 \choose k-1}$ is the number of sets of cardinality $k$ that contain $j$, we get

\begin{eqnarray*}
E_I & \leq & \left(\sum_{j=1}^q e_{j}\right) \sum_{k=1}^q {q-1 \choose k-1} k^{\alpha_i-1} Y^k (1-Y)^{q-k}\\
  & = & \left(\frac{\sum_{j=1}^q e_{j}}{q}\right) \sum_{k=1}^q {q \choose k} k^{\alpha_i} Y^k (1-Y)^{q-k} \\
  & = & \left(\frac{\sum_{j=1}^n q_j e_{j}}{q}\right) \sum_{k=1}^q {q \choose k} k^{\alpha_i} Y^k (1-Y)^{q-k} \\
 & = & \frac{Q}{q} LP_I^* \sum_{k=1}^q {q \choose k} k^{\alpha_i} Y^k (1-Y)^{q-k} \\
 & = & \frac{Q}{q} LP_I^* \cdot  \mathbb{E}[B_{q/Q}^{\alpha_i}]
 \end{eqnarray*}
where   $B_{q/Q}$ is a random variable with expectation $\frac{q}{Q}$ which corresponds to the sum of $q$ i.i.d Bernoulli random variables.
Therefore,
\begin{eqnarray*}
E_I & \leq & \frac{Q}{q} LP_I^* \cdot \mathbb{E}[B_{q/Q}^{\alpha_i}]
 \leq \frac{Q}{q}LP_I^*\cdot \mathbb{E}[P_{q/Q}^{\alpha_i}]
 \leq \frac{Q}{q}LP_I^*\cdot \frac{q}{Q}\mathbb{E}[P_1^{\alpha_i}]
\end{eqnarray*}
where the second inequality follows from Proposition~\ref{TechnicalPoisson} and the last inequality follows from  Proposition~\ref{prop:bounds}(a).
Therefore, by summing over all intervals and processors and as $\alpha=\max_{i \in \mathcal{P}} \alpha_i$, we get
\begin{equation}
\nonumber
E\leq LP^*\cdot \mathbb{E}[P_1^{\alpha}]
=LP^*\cdot \tilde{B}_{\alpha}
\end{equation}
The theorem follows.
\end{proof}

\noindent{\bf Remark:}  Let $f_i$ be the contribution of all variables $x_{i,j,c}$ to the value of the optimal LP solution,
i.e., $\sum_{i \in \mathcal{P}} f_i=f$ where $f$ is the optimal value of the configuration LP.
One can refine the analysis of the Theorem \ref{thm:approx} and show the factor of $\sum_{i \in \mathcal{P}} \tilde{B}_{\alpha_i}\frac{f_i}{f}$
instead of $\tilde{B}_{\alpha}$ if all $\alpha_i\geq 1$.

\subsection{Compact Linear Programming Relaxation}
\label{section:compact}

Next, we define a compact formulation for the problem without migrations and
we show that the relaxations of the compact and the configuration LPs are equivalent.
Recall that, by Lemma \ref{lemma:time-discrete}, there is always an $((1+\frac{\varepsilon}{1-\varepsilon})(1+\frac{2}{n-2}))^{\alpha}$-approximate schedule
for our problem such that if the job $j \in \mathcal{J}$ is executed on the processor $i \in \mathcal{P}$, then its feasibility interval $(r_{i,j},d_{i,j}]$
can be partitioned into equal-length slots.
Given such a slot $t$, $j$ is either executed during the whole $t$ or it is not executed at all during $t$.
The number of these slots is $\frac{n^3}{\varepsilon}$,
while each slot $t$ has length $\ell_t=\frac{\varepsilon}{n^3}(d_{i,j}-r_{i,j})$.
Recall also that $\mathcal{I}$ denotes the set of all intervals occurred by merging the slots for all jobs.

In order to formulate our problem as a compact LP, we introduce a binary variable $y_{i,j,q}$
which is equal to one if the job $j$ is executed on the processor $i$ during exactly $q$ slots and zero otherwise.
Moreover, we introduce a binary variable $z_{i,j,q,t}$ which is equal to one if the job $j$ is executed on the processor $i$
during the slot $t$ and it is executed during exactly $q$ slots in total. Otherwise, $z_{i,j,q,t}$ is equal to zero.
We define the constants $p_{i,j,q}=q\frac{\varepsilon}{n^3}(d_{i,j}-r_{i,j})$
and $E_{i,j,q}=\frac{w_{i,j}^{\alpha_i}}{p_{i,j,q}^{\alpha_i-1}}$ as the total execution time and the energy consumption, respectively,
of the job $j$ if it is entirely executed on the processor $i$ during exactly $q$ slots.
\begin{eqnarray}
\min \sum_{i,j,q}E_{i,j,q}\cdot y_{i,j,q} \nonumber \\
\sum_{i,q}y_{i,j,q}=1 & \hspace{2cm} \forall j\in\mathcal{J} \label{c6}\\
\sum_t z_{i,j,q,t}=q\cdot y_{i,j,q} & \hspace{2cm} \forall i\in\mathcal{P},j\in\mathcal{J},q\in\{1,2,\ldots,\frac{n^3}{\varepsilon}\} \label{c7}\\
\sum_{j,q}\sum_{t:I\subseteq t}z_{i,j,q,t} \leq 1 & \hspace{2cm} \forall i\in\mathcal{P}, I\in\mathcal{I} \label{c8} \\
y_{i,j,q}, z_{i,j,q,t} \in \{0,1\}& \hspace{2cm} \forall i\in\mathcal{P}, j\in\mathcal{J}, q,t\in\{1,2,\ldots,\frac{n^3}{\varepsilon}\} \label{c10}
\end{eqnarray}
The constraint (\ref{c6}) ensures that each job is entirely executed on some processor.
The constraint (\ref{c7}) establishes the relationship between the variables $z_{i,j,q,t}$ and $y_{i,j,q}$.
If $y_{i,j,q}=1$, then exactly $q$ variables $z_{i,j,q,t}$ must be equal to one.
The constraint (\ref{c8}) enforces that at most one job is executed by each processor at each time.
Specifically, given a job $j \in \mathcal{J}$ which is executed on the processor $i \in \mathcal{P}$,
if $j$ is executed during the slot $t\in\{1,2,\ldots,\frac{n^3}{\varepsilon}\}$,
then $j$ is executed during every interval $I\in\mathcal{I}$ such that $I\subseteq t$.
Note that the numbers of both the variables and the constraints of the above LP are polynomial to $n$ and to $1/\varepsilon$.

The configuration and the compact formulations are equivalent,
as they both lead to a minimum energy schedule satisfying Lemma~\ref{lemma:time-discrete}.
Consider now the LPs that occur if we relax constraints~(\ref{c5}) and~(\ref{c10}), respectively.
In Lemma~\ref{lemma:equiv} we prove that the equivalence is also true for these relaxations through a transformation of a solution for the configuration LP relaxation
to a solution for the compact LP relaxation, and vice versa.
As a result, given a solution of the compact LP relaxation obtained by any polynomial time algorithm,
we can get a solution for the configuration LP relaxation.
Then, we can apply the randomized rounding presented in the previous section and get the approximation ratio of Theorem~\ref{thm:approx}.

\begin{lemma}\label{lemma:equiv}
The relaxations of the configuration LP and the compact LP are equivalent.
\end{lemma}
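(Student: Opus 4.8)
The plan is to establish that the two relaxations attain the same optimal value by exhibiting, in both directions, an explicit transformation that maps a feasible solution of one relaxation to a feasible solution of the other with identical objective value. The single observation that makes the correspondence work is that the energy $E_{i,j,c}$ of a configuration $c\in\mathcal{C}_{ij}$ depends on $c$ only through the number of slots it occupies: if $c$ uses $q$ slots, the job runs at the constant speed $w_{i,j}/p_{i,j,q}$ and consumes exactly $E_{i,j,c}=w_{i,j}^{\alpha_i}/p_{i,j,q}^{\alpha_i-1}=E_{i,j,q}$. Thus the configuration variables group naturally by their slot-count $q$, which is precisely the information carried by $y_{i,j,q}$ and refined by $z_{i,j,q,t}$. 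I will also use throughout that, since the slots of a fixed job $j$ on $i$ partition $(r_{i,j},d_{i,j}]$, each interval $I\in\mathcal{I}$ is contained in a unique slot $t$ for that job.

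The direction from configuration to compact is direct. Given a feasible $x$, set $y_{i,j,q}=\sum_{c\in\mathcal{C}_{ij}:\,|c|=q}x_{i,j,c}$ and $z_{i,j,q,t}=\sum_{c\in\mathcal{C}_{ij}:\,|c|=q,\ t\in c}x_{i,j,c}$. Constraint (\ref{c6}) follows from (\ref{c3}) by summing over $q$; constraint (\ref{c7}) holds because every configuration counted in $y_{i,j,q}$ has exactly $q$ slots, so $\sum_t z_{i,j,q,t}=q\,y_{i,j,q}$; and, using that $I$ lies in a unique slot per job, the left side of (\ref{c8}) equals $\sum_{(i,j,c)\ni I}x_{i,j,c}\le 1$ by (\ref{c4}). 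The objective is preserved since $\sum_{i,j,q}E_{i,j,q}\,y_{i,j,q}=\sum_{i,j,c}E_{i,j,c}\,x_{i,j,c}$ by the slot-count observation.

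The substantive direction is from compact to configuration, where one must recover honest configurations (single $q$-element sets of slots) from the fractional usages $z_{i,j,q,t}$. Fixing a triple $(i,j,q)$ with $y_{i,j,q}>0$, I would normalize $v_t=z_{i,j,q,t}/y_{i,j,q}$, which by (\ref{c7}) satisfies $\sum_t v_t=q$, and decompose $v$ as a convex combination of indicator vectors of $q$-element slot-subsets. This is exactly the statement that the vertices of the hypersimplex $\{v:\,0\le v_t\le 1,\ \sum_t v_t=q\}$ are the $0/1$ indicators of $q$-subsets, which I would realize by a greedy ``water-filling'' extraction that peels off one such subset at a time and terminates after polynomially many steps. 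Putting $x_{i,j,c}=y_{i,j,q}$ times the weight of $c$ in the decomposition reproduces $y$ and $z$, gives (\ref{c3}) from (\ref{c6}) and (\ref{c4}) from (\ref{c8}) (again via the unique-slot property), and matches the objective.

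The main obstacle is this decomposition, and inside it the bound $z_{i,j,q,t}\le y_{i,j,q}$, which is needed for $v$ to lie in the hypersimplex but is not literally imposed by the relaxation (indeed one can exhibit feasible $(y,z)$ with $z_{i,j,q,t}>y_{i,j,q}$). Because the objective depends only on $y$, I would fix the optimal $y$ and argue that among all $z$ satisfying (\ref{c7})--(\ref{c8}) there is one with $z_{i,j,q,t}\le y_{i,j,q}$ for every slot, obtained by redistributing mass within each triple $(i,j,q)$. This is a transportation-type feasibility statement in which the shared capacity constraint (\ref{c8}) couples the different jobs competing for the same slot, so that spreading the load of one job can tighten another job's interval load; verifying that a valid redistribution always exists is the point I expect to require the most care. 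Once such a $z$ is secured, the greedy extraction yields the configurations, and the equality of the two optimal values—hence the equivalence of the relaxations—follows.
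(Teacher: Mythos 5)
Your two transformations track the paper's proof closely: the configuration-to-compact direction is identical, and your hypersimplex ``water-filling'' decomposition is exactly the paper's construction in different clothing (the paper builds, for each triple with $y_{i,j,q}>0$, a bipartite graph whose $A$-side has $q$ nodes of weighted degree $y_{i,j,q}$ and whose $B$-side has the slots with weighted degrees $z_{i,j,q,t}$, and decomposes it into perfect matchings, i.e., into $q$-subsets of slots). You have also put your finger on a genuine subtlety that the paper itself glosses over: the decomposition needs $z_{i,j,q,t}\le y_{i,j,q}$, which the relaxation of~(\ref{c6})--(\ref{c8}) does not imply, and which is precisely the hypothesis ``each node in $B$ has weighted degree at most one'' of the paper's matching proposition after all weights are divided by $y_{i,j,q}$.

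However, your proposed repair --- fix the optimal $y$ and redistribute $z$ inside (\ref{c7})--(\ref{c8}) until $z_{i,j,q,t}\le y_{i,j,q}$ holds --- is not merely delicate; as a pure feasibility statement about a fixed $y$ it is false, so the gap you flagged stays open in your write-up. Counterexample (one processor, two jobs, $N=n^3/\varepsilon$ slots per job, $N$ divisible by $6$): job $2$ has window $(2,3]$ and $y_{2,N}=1$, which by~(\ref{c7}) and~(\ref{c8}) forces $z_{2,N,s}=1$ on every slot of job $2$ and saturates every interval inside $(2,3]$; job $1$ has window $(0,3]$, and we set $y_{1,1}=y_{1,N}=\frac{1}{2}$, with $z_{1,N,t}=1$ on job $1$'s first $N/2$ slots and $z_{1,1,t}=\frac{1}{2}$ on its $(N/2+1)$-st slot, all of which lie inside $(0,2]$ since $N\ge 6$. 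This $(y,z)$ satisfies~(\ref{c6})--(\ref{c8}). Yet no redistribution with the same $y$ exists: $z'_{1,N,t}\le\frac{1}{2}$ together with $\sum_t z'_{1,N,t}=N/2$ forces $z'_{1,N,t}=\frac{1}{2}$ on \emph{all} $N$ slots of job $1$, including those inside $(2,3]$, where the residual capacity is zero. Equivalently, in the configuration LP $y_{1,N}>0$ is outright impossible, because job $1$'s unique $N$-slot configuration collides with job $2$ and violates~(\ref{c4}); so a compact-feasible $y$ need not be configuration-feasible at all. Any correct completion must therefore invoke more than feasibility: either argue via the convexity of $q\mapsto E_{i,j,q}$ that such marginals are never optimal (here $\frac{1}{2}E_{1,1}+\frac{1}{2}E_{1,N}>E_{1,2N/3}$, and $y_{1,2N/3}=1$ is configuration-feasible), or --- the cleanest fix, which simultaneously repairs the paper's own proof --- add to the compact formulation the inequalities $z_{i,j,q,t}\le y_{i,j,q}$, which are valid for every integral solution and keep the LP of polynomial size; with them, your greedy hypersimplex decomposition (and the paper's matching decomposition) goes through verbatim.
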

\begin{proof}
We will show that any feasible solution for the configuration LP relaxation
can be transformed to a feasible solution for the compact LP relaxation with the same energy consumption and vice versa.

Assume that we are given a feasible solution for the relaxation of the configuration LP.
Such a solution corresponds to a schedule of the jobs on the processors.
Specifically, the value of the variable $x_{i,j,c}$ specifies the part of the job $j \in \mathcal{J}$ executed on processor $i \in \mathcal{P}$
during the slots that belong to the configuration $c \in \mathcal{C}_{ij}$.
Let $\mathcal{C}_{ijq}\subseteq \mathcal{C}_{ij}$ be the set of configurations of $j$ on $i$ with exactly $q$ slots.
Then, we define  $z_{i,j,q,t}=\sum_{c\in \mathcal{C}_q:t\in c}x_{i,j,c}$.
This defines a feasible solution for the relaxation of the compact LP.

Assume that we are given a feasible solution for the compact LP.
We will define a set of configurations and we will assign a non-zero value for each variable $x_{i,j,c}$ that corresponds to these configurations.
The number of these configurations should be polynomial to $n$ and to $\frac{1}{\varepsilon}$.
The remaining variables of the configuration LP will be set to zero.

Consider a non-zero variable $y_{i,j,q}$ (and its corresponding variables $z_{i,j,q,t}$) in the solution of the compact LP.
We partition the part of the schedule defined by $y_{i,j,q}$ into a set of configurations with $q$ slots
and we specify the values of the variables $x_{i,j,c}$ that correspond to these configurations.
To do this, for each variable $y_{i,j,q}$ and its associated variables $z_{i,j,q,t}$, we construct a bipartite graph $G=(A\cup B,E)$ as follows.
The set $A$ contains $q$ nodes, i.e. $A=\{a_1,a_2,\ldots,a_q\}$.
Intuitively, each of these nodes corresponds to one of the $q$ slots of the configurations that will correspond to $y_{i,j,q}$.
The set $B$ contains $\frac{n^3}{\varepsilon}$ nodes, one for each possible slot of $j$ on the processor $i$ (see Lemma \ref{lemma:time-discrete}),
i.e. $B=\{b_1,b_2,\ldots,b_{\frac{n^3}{\varepsilon}}\}$.
We will define the set of edges $E$ and their weights, such that
each node $a_k\in A$ has weighted degree exactly $y_{i,j,q}$ and each node $b_t\in B$ has weighted degree exactly $z_{i,j,q,t}$.
Note that, the total weight of all the edges will be $q\cdot y_{i,j,q}=\sum_{t}z_{i,j,q,t}$.
We start by adding edges from $a_1$ to $b_1,b_2,\ldots$ of weight $z_{i,j,q,1},z_{i,j,q,2},\ldots$, respectively,
as long as $\sum_{t=1}^kz_{i,j,q,t}\leq y_{i,j,q}$.
The first time where $\sum_{t=1}^kz_{i,j,q,t}> y_{i,j,q}$ we add an edge between $a_1$ and $b_k$ of weight $y_{i,j,q}-\sum_{t=1}^{k-1}z_{i,j,q,t}$.
Moreover, we add an edge between $a_2$ and $b_k$ of weight $z_{i,j,q,k}-(y_{i,j,q}-\sum_{t=1}^{k-1}z_{i,j,q,t})$.
We continue adding edges from $a_2$ to $b_{k+1},b_{k+2},\ldots$ of weight $z_{i,j,q,k+1},z_{i,j,q,k+2},\ldots$, respectively,
until the sum of their weights is bigger than $y_{i,j,q}$.
At this point we add an edge of appropriate weight starting from $a_3$ and we continue like this.
Note that, by construction each node $b_t \in B$ has degree either one or two.
Then, we construct a set of configurations based on the following proposition.

\begin{proposition}
Let $G=(A \cup B,E)$ be a bipartite graph in which each node in $A$ has weighted degree exactly one and each node in $B$ has weighted degree at most one.
There are perfect matchings $M_1,M_2,\ldots,M_r$ (i.e., matchings having exactly $|A|$ edges) and coefficients $\lambda_1,\lambda_2,\ldots, \lambda_r$ such that $\sum_{i=1}^r\lambda_i=1$, and for each edge $e$ it holds that $\sum_{i:e\in M_i}\lambda_i=w_e$, where $w_e$ is the weight of the edge $e$.
\end{proposition}
\begin{proof}
By the construction of the graph $G$, all its edges have a positive weight and all nodes in the set $A$ have the same weighted degree which is equal to $y_{i,j,q}$.
Consider an arbitrary perfect matching $M$ in $G$.
Let $w_{\min} = \min_{e\in M}\{w_e\}$.
Clearly, $w_{\min} > 0$.
We define $\lambda_1 = w_{\min}$ and we modify the graph $G$ by setting the weight of every edge $e\in M$ equal to $w_e-w_{\min}$.
Then, we remove all edges with zero weight.
We repeat this procedure until the graph is empty.
Given that we remove at least one each in each iteration, we compute a polynomial number of perfect matchings.
\end{proof}

It is easy to see that the solution obtained for the configuration LP is feasible.
The fact that Constraint~(\ref{c3}) is satisfied comes from Constraints~(\ref{c6}) and~(\ref{c7}).
The fact that Constraint~(\ref{c4}) is satisfied comes from Constraint~(\ref{c8}).
\end{proof}

\section{Heterogeneous Multiprocessor with Migrations}
\label{section:migration}

In this section we present an algorithm for the heterogeneous multiprocessor speed-scaling problem with preemptions and migrations.
We assume that, if $x$ units of work for the job $j$ are executed on the processor $i$, then $x/w_{i,j}$ portion of $j$ is accomplished by $i$.
We formulate the problem as a configuration LP, with an exponential number of variables and a polynomial number of constraints,
and we show how to obtain an $OPT+\varepsilon$ solution with the Ellipsoid algorithm
in time polynomial to the size of the instance and to $1/\varepsilon$, where $\varepsilon>0$.

A configuration $c$ is a one-to-one assignment of $n_c$, $0 \leq n_c \leq m$,
jobs to the $m$ processors as well as an assignment of a speed value for every processor.
We denote by $\mathcal{C}$ the set of all possible configurations.
A well defined schedule for our problem has to specify exactly one configuration at each time $t$.
The cardinality of $\mathcal{C}$ is unbounded, since the processors' speeds may be any real values.
Hence, we have to discretize the possible speed values and consider only a finite number of speeds at which the processors can run.

\begin{lemma}\label{lemma:discr}
There is a feasible schedule of energy consumption at most $OPT+\varepsilon$ that uses
a finite (exponential to the size of the instance and polynomial to $1/\varepsilon$) number of discrete processors' speeds,
for any $\varepsilon>0$.
\end{lemma}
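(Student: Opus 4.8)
The goal is to show that we can restrict attention to a finite, manageable set of processor speeds while losing only an additive $\varepsilon$ in energy. The natural approach is to start from an optimal schedule $\mathcal{S}^*$ and perturb its speeds onto a carefully chosen discrete grid, controlling the energy increase. First I would argue that in $\mathcal{S}^*$ the speeds, though a priori arbitrary reals, can be bounded: there is a maximal meaningful speed, since running any job faster than total-work-divided-by-smallest-positive-interval-length is never necessary, and a job's speed on processor $i$ never needs to exceed $w_{i,j}$ divided by the smallest possible positive execution window. This gives an a priori upper bound $s_{\max}$ on every speed used in an optimal (or near-optimal) schedule, expressible in terms of the instance data.

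Next I would discretize the interval $[0,s_{\max}]$ into a geometric (multiplicative) grid rather than an additive one, since the power function $s^{\alpha_i}$ is convex and we want \emph{relative} error control. Concretely I would take speeds of the form $s_{\max}(1+\delta)^{-k}$ for integers $k\ge 0$ down to some threshold, for a suitably small $\delta$ depending on $\varepsilon$. Rounding each actual speed \emph{up} to the nearest grid point keeps the schedule feasible (work is completed at least as fast, so all deadlines are still met under the same time allocation) and multiplies the instantaneous power on processor $i$ by at most $(1+\delta)^{\alpha_i}\le (1+\delta)^{\alpha}$. Hence the total energy is multiplied by at most $(1+\delta)^{\alpha}$, which I would make at most $1+\varepsilon/OPT$-type factor; to convert the resulting \emph{multiplicative} $(1+O(\delta))$ guarantee into the claimed \emph{additive} $+\varepsilon$, I would either bound $OPT$ a priori and choose $\delta$ accordingly, or (cleaner) normalize and absorb a negligibly small speed threshold below which jobs contribute essentially nothing to energy.

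The bookkeeping step is to verify the promised size bounds: the number of grid speeds is $O\big(\frac{1}{\delta}\log s_{\max}\big)$, which is polynomial in $1/\varepsilon$ and in the encoding length of the instance, so it is polynomial in $1/\varepsilon$ and exponential (at worst) in the instance size as claimed. A configuration then fixes an assignment of at most $m$ jobs to processors together with one discretized speed per processor, so the total number of configurations is bounded by $(n+1)^{m}$ times the number of speed tuples, which is finite with the stated dependence. I would also need to handle speeds near zero: rather than discretizing all the way down, I would set any speed below $s_{\max}\cdot(\text{tiny threshold})$ to an appropriate small grid value, arguing the extra energy incurred is dominated by $\varepsilon$.

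The main obstacle I anticipate is the conversion from a clean \emph{multiplicative} $(1+\delta)^{\alpha}$ bound to the stated \emph{additive} $OPT+\varepsilon$ guarantee, since the rounding argument most naturally yields a $(1+O(\delta))\,OPT$ bound. Reconciling these requires either an a priori upper bound on $OPT$ in terms of the instance data (so that choosing $\delta\sim \varepsilon/(\alpha\cdot OPT)$ is legitimate) or a rescaling that makes the additive formulation equivalent to the multiplicative one up to the polynomial-in-$1/\varepsilon$ sizing; getting this dependence exactly right, while keeping the speed-grid cardinality polynomial in $1/\varepsilon$, is the delicate part of the proof.
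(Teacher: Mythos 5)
Your proposal follows essentially the same route as the paper: bound the speeds used by an optimal schedule between instance-derived values, lay down a geometric grid with ratio $1+\delta$, round every speed up (which preserves feasibility) at a multiplicative energy cost of at most $(1+\delta)^{\alpha}$, and then choose $\delta$ so that this becomes an additive $\varepsilon$. The ``delicate part'' you flag is resolved exactly by your first option --- the paper simply sets $\delta=(1+\varepsilon/OPT)^{1/\alpha}-1$, accepting a number of discrete speeds that is exponential in the instance size (which the lemma statement explicitly allows), and it sidesteps your near-zero threshold issue by using the minimum job density $s_{LB}=\min_{j}\{\min_{i}\{w_{i,j}\}/\sum_{i}(d_{i,j}-r_{i,j})\}$ as a nonzero lower bound on every speed appearing in an optimal schedule.
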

\begin{proof}
To discretize the speeds, we first define a lower and an upper bound on the speed of any processor in an optimal schedule.
For the lower bound, consider a job $j \in \mathcal{J}$.
Recall that the release date and the deadline of $j$ are different on different processors.
Hence, the feasible intervals of $j$ in different processors may be completely disjoint,
that is the processing time of $j$ in an optimal schedule can be equal to $\sum_{i \in \mathcal{P}} (d_{i,j}-r_{i,j})$.
Therefore, due to the convexity of the speed-to-power function,
a non-zero lower bound on the speed of every processor is the minimum density among all the jobs,
i.e., $s_{LB}=\min_{j \in \mathcal{J}} \{\frac{\min_{i \in \mathcal{P}}\{w_{i,j}\}}{\sum_{i \in \mathcal{P}} (d_{i,j}-r_{i,j})}\}$.
For the upper bound, consider a processor $i \in \mathcal{P}$.
An upper bound on the speed of $i$ can be obtained by calculating the speed at which the jobs would run if
they were all executed in the minimum alive interval of any job on $i$, i.e., $\frac{\sum_{j \in J}w_{i,j}}{\min_{j \in J}(d_{i,j}-r_{i,j})}$.
Hence, an upper bound on the speed of every processor is $s_{UB}=\max_{i \in P}\{\frac{\sum_{j \in J}w_{i,j}}{\min_{j \in J}(d_{i,j}-r_{i,j})}\}$.

Given these lower and upper bounds and a small constant $\delta>0$, we discretize the speed values in a geometric way.
In other words, we consider only the speeds of the form $(1+\delta) s_{LB}, (1+\delta)^2 s_{LB}, \ldots, (1+\delta)^k s_{LB}$,
where $k$ is the first integer such that $(1+\delta)^k s_{LB} \geq s_{UB}$.
Hence, the number of speed values is equal to $k=\left\lceil\log_{1+\delta} \frac{s_{UB}}{s_{LB}}\right \rceil$,
which is polynomial to the size of the instance and to $1/\log(1+\delta)$.

Consider now an optimal schedule for our problem.
Let $S$ be the schedule obtained from the optimal one by rounding up the processors' speeds to the closest discrete value.
The ratio of the energy consumption of any processor $i \in \mathcal{P}$ at any time $t$ in $S$
over the energy consumption by $i$ at $t$ in the optimal schedule is at most $(1+\delta)^{\alpha_i}$.
By summing up for all processors and all time instances, we get that the energy consumption of $S$ is at most $(1+\delta)^{\alpha} OPT$.
Finally, if we pick a $\delta$ such that $\delta=(1+\frac{\varepsilon}{OPT})^{1/\alpha}-1$,
then the energy consumption of $S$ is at most $OPT+\varepsilon$.
However, this selection made the number of discrete speeds to be exponential to the size of the instance and to $1/\varepsilon$.
\end{proof}

In what follows in this section, we deal with schedules that satisfy Lemma~\ref{lemma:discr}.
Let, now, $t_0<t_1<\ldots<t_{\ell}$ be the time instants that correspond to release dates and deadlines of jobs so that there is a time $t_i$
for every possible release date and deadline.
We denote by $\mathcal{I}$ the set of all possible intervals of the form $(t_{i-1},t_i]$, for $1\leq i\leq \ell$.
Let $|I|$ be the length of the interval $I$.

We introduce a variable $x_{I,c}$, for each $I\in\mathcal{I}$ and $c\in\mathcal{C}$,
which corresponds to the total processing time during the interval $I\in\mathcal{I}$ where the processors run according to the configuration $c\in\mathcal{C}$.
We denote by $E_{I,c}$ the instantaneous energy consumption of the processors if they run with respect to the configuration $c$ during the interval $I$.
Moreover, let $s_{j,c}$ be the speed of the job $j$ according to the configuration $c$.
For notational convenience, we denote by $(I,c)$ the set of jobs which are alive during the interval $I$ and
which are executed on some processor by the configuration $c$.
Finally, let $i(j,c)$ be the processor on which the job $j$ is assigned into configuration $c$.
We propose the following configuration LP:
\begin{eqnarray}
\min \sum_{I\in\mathcal{I},c\in\mathcal{C}}E_{I,c} \cdot x_{I,c} \nonumber\\
\sum_{c\in\mathcal{C}} x_{I,c} \leq |I| & \hspace{2cm} \forall I\in\mathcal{I} \label{c1}\\
\sum_{I,c:\;j\in(I,c)}\frac{s_{j,c}}{w_{i(j,c),j}}x_{I,c} \geq 1 & \hspace{2cm} \forall j\in\mathcal{J} \label{c2}\\
x_{I,c}\geq0 & \hspace{2cm} \forall I\in\mathcal{I},\; c\in\mathcal{C} \nonumber
\end{eqnarray}
Consider the schedule for the interval $I$ that occurs by an arbitrary order of the configurations assigned to $I$.
This schedule is feasible, as the processing time of all configurations assigned to $I$ is equal to the length of the interval.
Hence, Inequality~(\ref{c1}) ensures that for each interval $I$ there is exactly one configuration for each time $t \in I$.
Inequality~(\ref{c2}) implies that each job $j$ is entirely executed.

The above LP has an exponential number of variables.
In order to handle this, we create the dual LP, which has an exponential number of constraints.
Next, we show how to efficiently apply the Ellipsoid algorithm to it (see \cite{GrotschelLS93}).
For this, we provide a polynomial-time separation oracle, i.e.,
we give a polynomial-time algorithm which given a solution for the dual LP
decides if this solution is feasible or otherwise it identifies a violated constraint.
As we can compute an optimal solution for the dual LP,
we can also find an optimal solution for the primal LP by solving it with the variables
corresponding to the constraints that were found to be violated during the run of the ellipsoid
method and setting all other primal variables to be zero.
The number of these violated constraints is polynomial to the size of the instance and to $1/\varepsilon$.
Thus, we can solve the primal LP with a polynomial number of variables.

The dual LP is the following:
\begin{eqnarray*}
\max \; \sum_{j\in\mathcal{J}}\lambda_j- \sum_{I\in\mathcal{I}}\mu_I|I|\\
\sum_{j\in(I,c)}\frac{s_{j,c}}{w_{i(j,c),j}}\lambda_j-\mu_I\leq E_{I,c} & \hspace{2cm} \forall I\in\mathcal{I},\;c\in\mathcal{C} \\
\mu_I,\lambda_j\geq0 & \hspace{2cm} \forall I\in\mathcal{I},\;j\in\mathcal{J}
\end{eqnarray*}

The separation oracle for the dual LP works as follows.
For each $I\in\mathcal{I}$, we try to find if there is a violated constraint.
Recall that there are $O(nm)$ intervals in the set $\mathcal{I}$.
For a given $I$, it suffices to check the minimum among the values $E_{I,c}-\sum_{j\in(I,c)}\frac{s_{j,c}}{w_{i(j,c),j}}\lambda_j$
over all possible configurations $c$.
If this minimum value is less than $-\mu_I$, then we have a violated constraint.
Otherwise, if we cannot find any violated constraint for all $I \in \mathcal{I}$, then the dual solution is feasible.

Note here that $E_{I,c}=\sum_{j\in(I,c)}s_{j,c}^{\alpha_{i(j,c)}}$, and hence we want to find the minimum value of
$\sum_{j\in(I,c)}(s_{j,c}^{\alpha_{i(j,c)}}-\frac{s_{j,c}}{w_{i(j,c),j}}\lambda_j)$.
For each job $j \in \mathcal{J}$ that is alive during $I$, the term $s_{j,c}^{\alpha_{i(j,c)}}-\frac{s_{j,c}}{w_{i(j,c),j}}\lambda_j$
is minimized at the discrete value $v_{i(j,c),j}$ which is one of the two closest possible discrete speeds to the value
$\left(\frac{\lambda_j}{\alpha_{i(j,c)}\cdot w_{i(j,c),j}}\right)^{1/(\alpha_{i(j,c)}-1)}$.
To see this we just need to notice that we minimize a one variable convex function over a set of possible discrete values.
The value $\left(\frac{\lambda_j}{\alpha_{i(j,c)}\cdot w_{i(j,c),j}}\right)^{1/(\alpha_{i(j,c)}-1)}$ is obtained by minimizing
$s_{j,c}^{\alpha_{i(j,c)}}-\frac{s_{j,c}}{w_{i(j,c),j}}\lambda_j$ if there is no discretization
of the speeds and it is obtained by equating the derivative of the last expression with zero.
Hence, given an interval $I$, we want to find a configuration $c$ that minimizes
$\sum_{j\in(I,c)}(v_{i(j,c),j}^{\alpha_{i(j,c)}}-\frac{v_{i(j,c),j}}{w_{i(j,c),j}}\lambda_j)$.

Since a configuration $c$ assigns $0\leq n_c \leq m$ jobs to $m$ processors,
the problem of minimizing the last expression reduces to a maximum weighted matching
on the bipartite graph which is constructed as follows: we introduce one node for each job and one node for each processor.
There is an edge between each alive job $j \in \mathcal{J}$ in interval $I$ and each processor $i \in \mathcal{P}$
with weight equal to $-(v_{i,j}^{\alpha_i}-\frac{v_{i,j}}{w_{i,j}}\lambda_j)$.
The maximum weight matching in such a bipartite graph defines  a configuration $c$, that is an assignment of $n_c\leq m$ jobs to $m$ processors.

Hence, there is a polynomial time separation oracle for the dual problem which runs in polynomial time.
To apply the ellipsoid method in polynomial time, we need to check two additional technical conditions.
The first condition is that the value of all dual variables are upper bounded by some number $R$.
The second condition is that for the dual program there is a feasible point (or solution) and every point in a radius $r$ is feasible.
Then the running time of the ellipsoid method will be polynomial to $\log \frac{R}{r}$.

The first condition and the bound on $R$ can be derived from the fact that the solution of the problem must be a vertex of the
corresponding polyhedra since we know that the value of an optimal solution is bounded.
Therefore, $R$ is a polynomial involving various input parameters.
We skip the precise definition of $R$.
The second condition is satisfied for the point $(\lambda,\mu)$
defined as follows: $\lambda_j=1$ for all $j \in \mathcal{J}$ and $\mu_I$ is large enough such that
$-\mu_I+1\le \min_c \left(E_{I,c}-2\sum_{j\in(I,c)}\frac{s_{j,c}}{w_{i(j,c),j}}\right)$.
Hence, the inequalities are satisfied in the ball of radius 1 around $(\lambda,\mu)$, that is $r=1$.

\begin{theorem}
A schedule for the heterogeneous multiprocessor speed-scaling problem with migrations of energy consumption $OPT+\varepsilon$
can be found in polynomial time with respect to the size of the instance and to $1/\varepsilon$, for any $\varepsilon>0$.
\end{theorem}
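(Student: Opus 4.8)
The plan is to assemble the pieces developed above into a polynomial-time procedure that produces a schedule of energy $OPT+\varepsilon$. First I would invoke Lemma~\ref{lemma:discr}: at the cost of an additive $\varepsilon$ we may restrict attention to schedules that use only the finite, geometrically spaced set of discrete speeds, so it suffices to find, in polynomial time, a minimum-energy schedule over this restricted speed set. The next step is to argue that the configuration LP~(\ref{c1})--(\ref{c2}), built over exactly these discrete speeds, has optimal value equal to the minimum energy of such a discretized schedule. One direction is immediate: any discretized schedule induces a feasible solution by letting $x_{I,c}$ be the total length of time within $I$ during which the processors follow configuration $c$, which satisfies~(\ref{c1}) since the configurations partition $I$ and satisfies~(\ref{c2}) since every job is fully processed. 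For the converse, given a feasible $x$, within each interval $I$ I would lay the configurations out back to back in an arbitrary order; because each configuration assigns every job to at most one processor, no job is ever run in parallel, so migration is respected, and constraint~(\ref{c1}) guarantees the total length fits in $I$ while~(\ref{c2}) guarantees completion. Hence an optimal LP solution yields a schedule of energy at most $OPT+\varepsilon$.

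It remains to solve the LP optimally in polynomial time, which is the crux, since it has exponentially many variables. I would pass to the dual, which has polynomially many variables but exponentially many constraints, and run the Ellipsoid algorithm equipped with the separation oracle described above. For each of the $O(nm)$ intervals $I$, testing the dual constraints reduces to minimizing $\sum_{j\in(I,c)}(s_{j,c}^{\alpha_{i(j,c)}}-\frac{s_{j,c}}{w_{i(j,c),j}}\lambda_j)$ over configurations; since each job's contribution is a one-variable convex function of its speed, its optimal discrete speed $v_{i,j}$ is found by rounding the unconstrained minimizer to the nearer discrete value, after which choosing the best assignment of jobs to processors is a maximum-weight bipartite matching and hence solvable in polynomial time. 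Comparing the resulting minimum against $-\mu_I$ either certifies feasibility or exhibits a violated constraint.

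The main obstacle, and the part that requires care, is turning this oracle into a genuinely polynomial-time algorithm, which means verifying the two standard preconditions of the Ellipsoid method: an a priori bound $R$ on the magnitudes of the dual variables, obtained from boundedness of the optimum and the vertex structure of the dual polyhedron, and an explicitly exhibited feasible point together with a ball of radius $r=1$ contained in the feasible region, given by setting $\lambda_j=1$ and taking each $\mu_I$ large enough that every constraint holds with slack at least one. With these in hand the Ellipsoid method terminates in time polynomial in $\log(R/r)$ and the input size. Finally, I would collect the polynomially many dual constraints generated during its execution, solve the primal restricted to the corresponding variables (all others set to zero) to obtain an optimal primal solution, and convert it to a schedule by the arbitrary-ordering argument above. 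This schedule has energy at most $OPT+\varepsilon$ and is produced in time polynomial in the size of the instance and in $1/\varepsilon$, as claimed.
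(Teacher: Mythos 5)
Your proposal follows essentially the same route as the paper: discretize speeds via Lemma~\ref{lemma:discr}, formulate the configuration LP over intervals, solve its dual by the Ellipsoid method with a separation oracle that fixes each job's best discrete speed and then solves a maximum-weight bipartite matching per interval, verify the bounds $R$ and $r=1$, and recover the primal from the violated constraints. One small correction: since the per-job function $s^{\alpha_i}-\frac{s}{w_{i,j}}\lambda_j$ is convex, the optimal discrete speed is one of the \emph{two} discrete values adjacent to the continuous minimizer (both must be checked), not necessarily the nearer one as you state, but this does not affect the argument.
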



\section{Single processor without Preemptions}
\label{section:single}

In this section we present an approximation algorithm for the single processor speed-scaling problem in which the preemption of jobs is not allowed.
As a single processor is available, each job $j \in \mathcal{J}$ has a unique release date $r_j$, deadline $d_j$ and amount of work $w_j$,
while when the processor runs at speed $s$, it consumes energy with rate $s^{\alpha}$.
Due to the convexity of the speed-to-power function, $j$ runs at a constant speed $s_j$ in an optimal schedule $\mathcal{S}^*$.
Antoniadis and Huang \cite{AntoniadisH12} proved that this problem is $\mathcal{NP}$-hard and gave a $2^{5\alpha-4}$-approximation algorithm.

The algorithm in \cite{AntoniadisH12} consists of a series of transformations of the initial instance.
Our algorithm applies the first of these transformations.
Then, we give a transformation to the heterogeneous multiprocessor speed-scaling problem without migrations.

For completeness, we describe the first transformation given in \cite{AntoniadisH12}.
We partition the time as follows:
let $t_1$ be the smallest deadline of any job in $\mathcal{J}$, i.e., $t_1=\min\{d_j : j \in \mathcal{J}\}$.
Let $\mathcal{J}_1 \subseteq \mathcal{J}$ be the set of jobs which are released before $t_1$, i.e., $\mathcal{J}_1=\{j \in \mathcal{J} : r_j \leq t_1\}$.
Next, we set $t_2=\min\{d_j : j \in \mathcal{J} \setminus \mathcal{J}_1\}$ and $\mathcal{J}_2=\{j \in \mathcal{J} : t_1 < r_j \leq t_2\}$,
and we continue this procedure until all jobs are assigned into a subset of jobs.
Let $k$ be the number of subsets of jobs that have been created.
Moreover, let $t_0=\min\{r_j: j \in \mathcal{J}\}$ and $t_{k+1}=\max\{d_j: j \in \mathcal{J}\}$.

Consider the intervals $(t_{i-1},t_i]$, $1 \leq i \leq k+1$.
Let $\mathcal{I}_j$ be the set of intervals in which the job $j \in \mathcal{J}$ is alive.
In some of them $j$ is alive during the whole interval, while in at most two of them it is alive during a part of the interval.
Consider now the non-preemptive problem in which the execution of $j$ should take place into exactly one interval $I \in \mathcal{I}_j$.
Note that the execution of $j$ should respect its release date and its deadline.

\begin{proposition} \label{prop:ah}
Let $\mathcal{S}$ be an optimal non-preemptive schedule for the problem in which the execution of each job $j \in \mathcal{J}$
should take place into exactly one interval $I \in \mathcal{I}_j$.
It holds that $E(\mathcal{S}) \leq 2^{\alpha-1} OPT$.
\end{proposition}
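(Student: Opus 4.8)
The plan is to construct, from an optimal schedule $\mathcal{S}^*$ of the original (unconstrained) non-preemptive problem, a feasible solution of the constrained problem in which each job is forced into a single interval of $\mathcal{I}_j$, whose energy exceeds $OPT$ by a factor of at most $2^{\alpha-1}$. Since $\mathcal{S}$ is optimal for the constrained problem, the bound $E(\mathcal{S}) \le 2^{\alpha-1} OPT$ then follows immediately. Recall that in $\mathcal{S}^*$ every job $j$ runs at a constant speed $s_j$ over a contiguous execution interval $[a_j,b_j] \subseteq [r_j,d_j]$, and distinct jobs occupy pairwise disjoint such intervals.

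The key structural step is to show that $[a_j,b_j]$ contains at most one of the partition points $t_1,\dots,t_k$. I would argue by contradiction: suppose $[a_j,b_j]$ contains two consecutive points $t_i < t_{i+1}$. By construction of the partition, the job $j'$ attaining $t_{i+1}=\min\{d_{j'} : j' \notin \mathcal{J}_1\cup\dots\cup\mathcal{J}_i\}$ lies in $\mathcal{J}_{i+1}$, so $t_i < r_{j'} \le d_{j'} = t_{i+1}$. Hence the entire feasibility window $[r_{j'},d_{j'}]$ is contained in $(t_i,t_{i+1}]$, and $j'$ must receive positive execution time inside $(t_i,t_{i+1}]$. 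But $[a_j,b_j]\supseteq[t_i,t_{i+1}]$ means $j$ already occupies all of $(t_i,t_{i+1}]$, which is impossible on a single processor; moreover $j\neq j'$ since $r_j\le a_j\le t_i < r_{j'}$. This contradiction proves the claim, and it also forces $a_j>t_{i-1}$ whenever $t_i\in[a_j,b_j]$, so that $[a_j,t_i]\subseteq(t_{i-1},t_i]$ and $[t_i,b_j]\subseteq(t_i,t_{i+1}]$.

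Given this, each job $j$ crosses at most one partition point $t_i$, which splits $[a_j,b_j]$ into a left part $[a_j,t_i]$ and a right part $[t_i,b_j]$, both lying in single partition intervals that belong to $\mathcal{I}_j$ (since $j$ is alive in them). I would then reschedule $j$ entirely within whichever of the two parts is longer, running at the higher constant speed needed to complete $w_j$ there; jobs crossing no partition point are left untouched. The longer part has length at least $(b_j-a_j)/2$, so the new speed is at most $2s_j$ and the energy of $j$ grows by a factor of at most $2^{\alpha-1}$. Because the chosen parts are sub-intervals of the originally disjoint execution intervals, the resulting execution intervals stay pairwise disjoint and each lies in one interval of $\mathcal{I}_j$, so the schedule is feasible for the constrained problem. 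Summing the per-job bound yields total energy at most $2^{\alpha-1}OPT$, and optimality of $\mathcal{S}$ gives the proposition. The main obstacle is the structural claim that no execution interval can straddle two partition points; once it is in hand, the remainder is the routine "shrink to the longer half" convexity estimate.
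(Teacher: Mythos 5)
Your proof is correct and follows essentially the same route as the paper: the same structural claim that a non-preemptive execution interval cannot straddle two consecutive partition points (proved via the job $j'$ defining $t_{i+1}$, whose entire window lies inside $(t_i,t_{i+1}]$), followed by the same ``move the job into the longer side and at most double its speed'' argument giving the $2^{\alpha-1}$ factor. The only differences are cosmetic, e.g.\ you use the full longer part while the paper compresses the execution to exactly half its original length at speed $2s_j$.
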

\begin{proof}
In order to get a relation about the energy consumption between the schedule $\mathcal{S}$ and the optimal schedule $\mathcal{S}^*$,
consider first a job $j \in \mathcal{J}_{\ell}$ which is alive in more than one intervals, i.e., $|\mathcal{I}_j|\geq2$.
By definition, it holds that $r_j \leq t_{\ell}$ and $t_{\ell'} < d_j \leq t_{\ell'+1}$, where $\ell < \ell'$.
Moreover, consider a $p$, $\ell < p \leq \ell'$, and let $j' \in \mathcal{J}_p$ be the job that defines $t_p$, i.e., $d_{j'}=t_p$.
By definition, for $j'$ it holds that $t_{p-1} < r_{j'} \leq t_p$.
Although $j$ is alive at times $t_{p-1}$ and $t_p$, there is no feasible schedule in which $j$ is executed at both of them;
otherwise $j'$ could not be feasibly executed as we have available only one processor.
Therefore, in $\mathcal{S}^*$ a job cannot appear into more than two consecutive intervals $(t_{\ell-1},t_{\ell}]$ and $(t_{\ell},t_{\ell+1}]$.

Starting from $\mathcal{S}^*$, we create a feasible non-preemptive schedule $\mathcal{S}'$ for the problem
in which the execution of each job $j \in \mathcal{J}$ should take place into exactly one interval $I \in \mathcal{I}_j$
respecting its release date and its deadline.
In order to do this, consider a job $j \in \mathcal{J}$ which is executed into two intervals in $\mathcal{S}^*$,
let $(t_{\ell-1},t_{\ell}]$ and $(t_{\ell},t_{\ell+1}]$.
Let $e_{j,\ell}$ and $e_{j,\ell+1}$ be the execution time of $j$ into $(t_{\ell-1},t_{\ell}]$ and $(t_{\ell},t_{\ell+1}]$, respectively.
Assume, w.l.o.g., that $e_{j,\ell} \geq e_{j,\ell+1}$.
In $\mathcal{S}$, we execute the whole work of $j$ during $(t_{\ell-1},t_{\ell}]$ such that its execution takes exactly $\frac{(e_{j,\ell} + e_{j,\ell+1})}{2}$ time.
In order to do this, we just have to increase the speed $s_j$ that $j$ had in $\mathcal{S}^*$ by a factor of $2$.
Hence, the energy consumption of $j$ in $\mathcal{S}^*$ was $(e_{j,\ell} + e_{j,\ell+1})s_j^{\alpha}$,
while in $\mathcal{S}'$ is $\frac{(e_{j,\ell} + e_{j,\ell+1})}{2}(2s_j)^{\alpha}$.
By summing up for all jobs we get that $E(\mathcal{S}') \leq 2^{\alpha-1} OPT$.
As $\mathcal{S}$ is an optimal schedule, we get that $E(\mathcal{S}) \leq 2^{\alpha-1} OPT$.
\end{proof}

Next, we describe how to pass from the transformed problem to the heterogeneous multiprocessor speed-scaling problem without migrations.
For every interval $(t_{i-1},t_i]$, $1 \leq i \leq k+1$, we create a processor $i$.
For every job $j \in \mathcal{J}$ which is alive during a part or during the whole interval $(t_{i-1},t_i]$, $1 \leq i \leq k+1$, we set:
(i) $r_{i,j}=0$ if $r_j \leq t_{i-1}$ or $r_{i,j}=r_j-t_{i-1}$ if $r_j > t_{i-1}$,
(ii) $d_{i,j}=t_i-t_{i-1}$ if $d_j > t_i$ or $r_{i,j}=d_j-t_{i-1}$ if $d_j \leq t_i$, and
(iii) $w_{i,j}=w_j$.
For each processor $i$, $1 \leq i \leq k+1$, we set $\alpha_i=\alpha$.

We next apply the approximation algorithm presented in Section~\ref{section:non-migration} which is based on the rounding of a configuration LP.
Note that the number of configurations of each job here is polynomial to $n$ and to $1/\varepsilon$,
as we consider that preemptions are not allowed and hence a configuration can only contain continuous slots.
Thus, the resulting LP after the transformation has polynomial size and it can be directly solved
without using the compact LP presented in Section~\ref{section:compact}.

Note also that the algorithm presented in Section~\ref{section:non-migration} will create a preemptive schedule $\mathcal{S}$.
However, we can transform $\mathcal{S}$ into a non-preemptive schedule $\mathcal{S}'$ of the same energy consumption.
To see this, note that in each processor $i$, $1 \leq i \leq k+1$,
each job $j \in \mathcal{J}$ has $r_{i,j}=0$ or $d_{i,j}=t_i-t_{i-1}$.
Hence, by applying the Earliest Deadline First policy to each processor separately we can get the non-preemptive schedule $\mathcal{S}'$.

\begin{theorem}
The single processor speed-scaling problem without preemptions can be approximated within a factor of
$2^{\alpha-1} ((1+\frac{\varepsilon}{1-\varepsilon}) (1+\frac{2}{n-2}))^{\alpha}\tilde{B}_{\alpha}$,
where $\varepsilon \in (0,1)$.
\end{theorem}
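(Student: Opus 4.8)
The plan is to chain together the three reductions established above, tracking the multiplicative loss at each stage. Let $OPT$ denote the optimum energy of the original single-processor non-preemptive instance. First I would invoke the interval-restriction of Antoniadis and Huang \cite{AntoniadisH12}: restricting attention to schedules in which every job $j$ is executed inside a single interval $I \in \mathcal{I}_j$ costs at most a factor $2^{\alpha-1}$, by Proposition~\ref{prop:ah}. Hence it suffices to approximate this restricted problem well and then multiply the guarantee by $2^{\alpha-1}$.

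Second, I would use the transformation described just before the statement, which converts the restricted instance into a heterogeneous multiprocessor non-migratory instance by creating one processor per interval $(t_{i-1},t_i]$, setting $\alpha_i=\alpha$ and defining the release dates, deadlines and works as prescribed. The correspondence is exact: assigning job $j$ entirely to processor $i$ is the same as scheduling $j$ inside the interval $(t_{i-1},t_i]$, so non-migratory schedules of the multiprocessor instance are in one-to-one energy-preserving correspondence with schedules of the restricted single-processor problem. I would also record here that, because preemptions are forbidden, each job admits only polynomially many configurations (the contiguous ones), so the configuration LP already has polynomial size and can be solved directly, without recourse to the compact reformulation of Section~\ref{section:compact}.

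Third, I would apply Algorithm~\ref{algo} together with Theorem~\ref{thm:approx} to this multiprocessor instance, obtaining a (preemptive) schedule of expected energy at most $((1+\frac{\varepsilon}{1-\varepsilon})(1+\frac{2}{n-2}))^{\alpha}\tilde{B}_{\alpha}$ times the multiprocessor optimum, which by the first two steps is itself at most $2^{\alpha-1}\,OPT$. Multiplying the two factors then yields the claimed ratio $2^{\alpha-1}((1+\frac{\varepsilon}{1-\varepsilon})(1+\frac{2}{n-2}))^{\alpha}\tilde{B}_{\alpha}$.

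The step I expect to require the most care is reconciling preemption: Algorithm~\ref{algo} outputs a \emph{preemptive} multiprocessor schedule, whereas the target problem is non-preemptive. I would close this gap via the structural observation that, on each created processor $i$, every alive job satisfies either $r_{i,j}=0$ or $d_{i,j}=t_i-t_{i-1}$; applying Earliest Deadline First on each processor separately therefore produces a non-preemptive schedule of identical energy. Combining this energy-preserving de-preemption with the product of the two approximation factors establishes the theorem.
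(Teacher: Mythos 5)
Your proposal is correct and follows essentially the same route as the paper: the $2^{\alpha-1}$ loss from Proposition~\ref{prop:ah}, the interval-to-processor transformation with $\alpha_i=\alpha$, the observation that non-preemptive configurations are contiguous (so the configuration LP is already polynomial-size), the application of Theorem~\ref{thm:approx}, and the EDF de-preemption based on each job having $r_{i,j}=0$ or $d_{i,j}=t_i-t_{i-1}$. All of these steps appear in the paper's argument in the same order, so there is nothing to add.
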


\section{Job Shop Scheduling with Preemptions}
\label{section:jobshop}

In this section, we consider the energy minimization problem in a job shop environment.
An instance of the problem contains a set of jobs $\mathcal{J}$,
where each job $j\in\mathcal{J}$ consists of $\mu_j$ operations $O_{j,1},O_{j,2},\ldots,O_{j,\mu_j}$, which must be executed in this order.
That is,
there are precedence constraints of the form $O_{j,k}\rightarrow O_{j,k+1}$, for each $j\in \mathcal{J}$ and $1\leq k\leq \mu_j-1$,
meaning that the operation
$O_{j,k+1}$ can start only once the operation $O_{j,k}$ has finished.
Let $\mu$ be the number of all the operations, i.e. $\mu=\sum_{j\in\mathcal{J}}\mu_j$.
Each operation $O_{j,k}$ has an amount of work $w_{j,k}$.
Moreover, we are given a set of $m$ heterogeneous processors $\mathcal{P}$.
Every operation $O_{j,k}$, $j\in\mathcal{J}$ and $1\leq k\leq \mu_j$, is also associated with a single processor $i\in\mathcal{P}$ on which it must be entirely executed.
Note that more than one operations of the same job may have to be executed on the same processor.
Furthermore, for each operation $O_{j,k}$, we are given a release date $r_{j,k}$ and a deadline $d_{j,k}$.
For each $j \in \mathcal{J}$, we can assume that $r_{j,1} \leq r_{j,2} \leq \ldots \leq r_{j,\mu_j}$
as well as $d_{j,1} \leq d_{j,2} \leq \ldots \leq d_{j,\mu_j}$.
Preemptions of operations are allowed.
The objective is to find a feasible schedule of minimum energy consumption.

Next, we formulate the job shop problem as an integer configuration LP.
A configuration is a schedule for a job, i.e. a schedule for all its operations.
In order to define formally the notion of a configuration, we have to discretize the time.
We define the time points $t_0,t_1,\ldots,t_{\tau}$, in increasing order, where each $t_{\ell}$ corresponds to either
a release date or a deadline, so that there is a corresponding $t_{\ell}$ for each possible release date and deadline of an operation.
Then, we define the intervals $I_{\ell}=(t_{\ell-1},t_{\ell}]$, for $1\leq \ell \leq \tau$, and we denote by $|I_{\ell}|$ the length of $I_{\ell}$.
We further discretize the time inside each interval $I_{\ell}$, $1 \leq \ell \leq \tau$, based on the following lemmas in which it is assumed that the release dates and the deadlines of all operations are integers.

\begin{lemma} \label{lemma:discjs}
There is a feasible schedule with energy consumption at most $(1+\varepsilon)^{\alpha} \cdot OPT$
in which each piece of each operation $O_{j,k}$, $j\in \mathcal{J}$ and $1 \leq k \leq \mu_j$, executed during the interval $I_{\ell}$, $1 \leq \ell \leq \tau$,
starts and ends at a time point $t_{\ell-1}+h\frac{\varepsilon}{\mu(1+\varepsilon)}|I_{\ell}|$, where $h\geq 0$ is an integer and $\varepsilon \in (0,1)$.
\end{lemma}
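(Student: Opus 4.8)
The plan is to adapt the proof of Lemma~\ref{lemma:time-discrete} to the job shop setting, the essential new difficulty being the precedence constraints $O_{j,k}\rightarrow O_{j,k+1}$. I would start from an optimal schedule $\mathcal{S}^*$ and first bound the number of execution pieces per interval. A preemption is only ever useful at a time instant where another operation becomes available on the same processor, and by the choice of the $t_\ell$ the interior of a fixed interval $I_\ell$ contains no release date or deadline; hence inside $I_\ell$ such an instant can only be the completion of some predecessor. Since there are $O(\mu)$ operations, and hence $O(\mu)$ such events, I may remove any remaining (unnecessary) preemptions by local swaps without altering any operation's total processing time, and therefore without changing the energy, to obtain a near-optimal schedule in which each interval $I_\ell$ contains only $O(\mu)$ maximal pieces, matching the $\mu$ in the denominator of the slot length $\sigma=\tfrac{\varepsilon}{\mu(1+\varepsilon)}|I_\ell|$.

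Next I would create alignment slack by a uniform speed-up: increase the speed of every operation by a factor $(1+\varepsilon)$. Since the power is $s^{\alpha_i}$ and $\alpha=\max_i\alpha_i$, this multiplies the total energy by a factor of at most $(1+\varepsilon)^\alpha$, and because every completion time only moves earlier while no start is delayed, every precedence constraint remains satisfied. Inside $I_\ell$ the total busy time on each processor drops to at most $\tfrac{1}{1+\varepsilon}|I_\ell|$, freeing an amount $\tfrac{\varepsilon}{1+\varepsilon}|I_\ell|=\mu\sigma$, i.e.\ exactly one slot of slack per piece. I would then align each piece to the grid of slot length $\sigma$ with endpoints $t_{\ell-1}$ and $t_\ell$: round the start of every piece down and its completion up to the nearest multiples of $\sigma$, so that each piece occupies whole slots. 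Rounding completions up and starts down never shortens a piece, so no operation needs a further speed increase and the bound $(1+\varepsilon)^\alpha\,OPT$ is preserved; the overlaps this rounding could create on a single processor are absorbed by spending, from the freed slack, at most one empty buffer slot per piece, which is affordable precisely because the slack equals $\mu\sigma$.

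The step I expect to be the main obstacle is verifying that this snapping preserves \emph{all} precedence constraints, including those relating two operations of the same job that lie on different processors inside the same interval $I_\ell$. Across different intervals this is automatic, since a piece ending in $I_\ell$ ends by $t_\ell\le t_{\ell'-1}$ for any later interval $I_{\ell'}$, and $t_{\ell-1},t_\ell$ are themselves grid points. Within $I_\ell$, for a constraint $O_{j,k}\rightarrow O_{j,k+1}$ with completion time $b\le$ start time $a$, I would use the slack to guarantee a gap of at least $2\sigma$ between them before rounding; then, by the monotonicity of rounding, the rounded completion satisfies $\lceil b\rceil_\sigma\le\lfloor a\rfloor_\sigma$, so the rounded pieces still respect the precedence. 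Distributing these $2\sigma$ buffers over the $O(\mu)$ precedence pairs active in $I_\ell$ consumes only an $O\!\left(\tfrac{\varepsilon}{1+\varepsilon}|I_\ell|\right)$ amount of slack, which is available. Once feasibility, grid-alignment, and the energy bound $(1+\varepsilon)^\alpha\,OPT$ are checked together, the lemma follows.
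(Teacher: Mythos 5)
There is a genuine gap, and it lies exactly where you predicted: the grid-alignment step. Your plan rounds each piece's start down and completion up to multiples of $\sigma=\frac{\varepsilon}{\mu(1+\varepsilon)}|I_\ell|$ and then repairs the damage with buffer slots, but this repair is a cascading process that you do not control. Resolving an overlap on a processor shifts all later pieces on that processor; a shifted completion of $O_{j,k}$ then forces a shift of $O_{j,k+1}$ on a \emph{different} processor, which creates new overlaps there, and so on. Consequently a predecessor's completion can be delayed not by one slot but by up to $\Theta(\mu)$ slots, so your fixed pre-rounding gap of $2\sigma$ per precedence pair does not guarantee feasibility after rounding. The slack accounting also fails at the level of constants: the speed-up by $(1+\varepsilon)$ frees exactly $\frac{\varepsilon}{1+\varepsilon}|I_\ell|=\mu\sigma$ per processor, yet you want to spend one buffer slot per piece (already up to $\mu\sigma$ if there are $\mu$ pieces, and your own bound is only $O(\mu)$ pieces, not $\mu$) \emph{plus} $2\sigma$ per precedence pair; this exceeds the budget, and the lemma's statement fixes the slot length, so you cannot hide the excess in a bigger denominator. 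Finally, your first step (removing preemptions by ``local swaps'' to get $O(\mu)$ pieces) is itself a nontrivial structural theorem; the paper does not prove it but imports it from Baptiste et al.~\cite{BaptisteCKQSS11} (their Corollary 4.2), in the sharper form that the restriction of the schedule to $I_\ell$, viewed as a preemptive job shop schedule with makespan $|I_\ell|$, can be assumed to have at most $\mu$ \emph{slices}.

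The missing idea that makes the paper's proof work is to round \emph{slices}, not pieces. A slice is a maximal subinterval of $I_\ell$ delimited by consecutive begin/completion events over \emph{all} processors; within a slice, every processor is either idle or runs a single operation. After speeding up by $(1+\varepsilon)$ (energy factor $(1+\varepsilon)^{\alpha}$, idle space $\frac{\varepsilon}{1+\varepsilon}|I_\ell|$ created), the paper rounds up the \emph{length} of each slice to the next multiple of $\sigma$. Because this stretching is synchronized across all processors, the time-ordering of every event is preserved order-isomorphically: no two pieces on a processor can come to overlap, and every precedence constraint, including those between operations of the same job on different processors inside $I_\ell$, remains satisfied with no case analysis and no buffers. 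Every piece then automatically starts and ends at a grid point $t_{\ell-1}+h\sigma$, and since there are at most $\mu$ slices the total expansion is at most $\mu\sigma=\frac{\varepsilon}{1+\varepsilon}|I_\ell|$, exactly the idle space created. Your per-piece rounding could likely be salvaged only by redoing some version of this synchronized argument, at which point one may as well argue about slices directly.
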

\begin{proof}
Consider an optimal schedule $\mathcal{S}^*$ for our problem and an interval $I_{\ell}$, $1 \leq \ell \leq \tau$.
We define the time points $u_0=t_{\ell-1},u_1,u_2,\ldots,u_p=t_{\ell}$, in increasing order,
where each $u_q$, $0 \leq q \leq p$, corresponds to either a begin time or a completion time
of a piece of an operation on any processor during $I_{\ell}$ in $\mathcal{S}^*$,
so that for each begin time and completion time there is a corresponding $u_q$.
We call the interval $(u_{q-1},u_q]$, for $1 \leq q \leq p$, a \emph{slice}.
Consider any such slice and any processor $i\in\mathcal{P}$.
During the whole slice, the processor is either idle of fully occupied by a single operation.

Note that we can see the part of the schedule $\mathcal{S}^*$ during the interval $I_{\ell}$
as a schedule for the preemptive job shop problem without speed-scaling where the makespan is at most $|I_{\ell}|$.
Baptiste \cite{BaptisteCKQSS11} at al. (see their Corollary 4.2) showed that there is always a schedule for this problem with at most $\mu$ slices.

We will now transform $\mathcal{S}^*$ to a feasible schedule $\mathcal{S}$ satisfying the lemma.
Consider an interval $I_{\ell}$, $1 \leq \ell \leq \tau$.
We first create an idle period of length at least $\frac{\varepsilon}{1+\varepsilon}|I_{\ell}|$.
This can be done by increasing the speeds of all processors of all slices in $I_{\ell}$ by a factor of $1+\varepsilon$.
Hence, the energy consumption becomes at most a factor of $(1+\varepsilon)^{\alpha}$ far from the energy of $\mathcal{S}^*$.
In order to obtain $\mathcal{S}$, we round up the length of each slice to the closest $h\frac{\varepsilon}{\mu(1+\varepsilon)}|I_{\ell}|$.
In this way, the length of each slice is increased by at most $\frac{\varepsilon}{\mu(1+\varepsilon)}|I_{\ell}|$.
Since the number of slices is at most $\mu$, the total processing time in $I_{\ell}$ is increased by at most
$\mu(\frac{\varepsilon}{\mu(1+\varepsilon)}|I_{\ell}|)=\frac{\varepsilon}{1+\varepsilon}|I_{\ell}|$,
which is the length of the created idle period.
Thus, $\mathcal{S}$ is a feasible schedule, and the lemma follows.
\end{proof}

\begin{lemma} \label{lemma:discjs2}
There is a feasible schedule with energy consumption at most
$(1+\varepsilon)^{\alpha} (1+\frac{2}{\mu-2})^{\alpha} (1+\frac{\varepsilon}{1-\varepsilon})^{\alpha} \cdot OPT$ such that,
for each operation $O_{j,k}$, $j\in \mathcal{J}$ and $1 \leq k \leq \mu_j$,
there are two time points $b_{j,k}$ and $c_{j,k}$, as the ones defined in Lemma~\ref{lemma:discjs}, so that
each piece of $O_{j,k}$ starts and ends at a time point $b_{j,k}+h\frac{\varepsilon}{\mu^3}(c_{j,k}-b_{j,k})$ in $(b_{j,k},c_{j,k}]$,
where $h\geq0$ is an integer and $\varepsilon\in(0,1)$.
\end{lemma}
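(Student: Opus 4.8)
The plan is to mirror the two-stage transformation used in the proof of Lemma~\ref{lemma:time-discrete}, with the number of operations $\mu$ playing the role of the number of jobs $n$. The starting point is a schedule $\mathcal{S}$ that already satisfies Lemma~\ref{lemma:discjs}, so that its energy is at most $(1+\varepsilon)^{\alpha}\cdot OPT$ and every piece begins and ends on the per-interval grid. For each operation $O_{j,k}$ I would let $b_{j,k}$ and $c_{j,k}$ be the grid points at which its first piece begins and its last piece ends in $\mathcal{S}$; these are of the form required by Lemma~\ref{lemma:discjs}, and the precedence constraints guarantee $c_{j,k-1}\leq b_{j,k}$ and $c_{j,k}\leq b_{j,k+1}$, so that the window $(b_{j,k},c_{j,k}]$ of one operation never overlaps the window of the next operation of the same job.

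First I would inflate the processing time of every operation so that cutting it finely produces many fully covered slots. Exactly as in Lemma~\ref{lemma:time-discrete}, I speed up every processor by a factor $1+\frac{\varepsilon}{1-\varepsilon}$ to free an $\varepsilon$-fraction of idle time in each unit, which costs a factor $(1+\frac{\varepsilon}{1-\varepsilon})^{\alpha}$ in energy. Since at most $\mu$ operations are ever assigned to a given processor, I can reserve an $\frac{\varepsilon}{\mu}$-fraction of each unit inside the window $(b_{j,k},c_{j,k}]$ to the operation $O_{j,k}$ and slow it down to fill this reservation, so that its total processing time becomes at least $\frac{\varepsilon}{\mu}(c_{j,k}-b_{j,k})$. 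Because the endpoints $b_{j,k}$ and $c_{j,k}$ are left untouched, this rescaling keeps the schedule feasible with respect to the precedences.

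Next I need to control the number of maximal pieces of each operation, which is the job-shop analogue of the bound of $n$ preemptions obtained by EDF in Lemma~\ref{lemma:time-discrete}. Here I would invoke the structural result of Baptiste et al.~\cite{BaptisteCKQSS11} (Corollary 4.2) already used in Lemma~\ref{lemma:discjs}: applying it so as to reschedule the already slowed-down operations without changing their speeds, and hence without changing the energy, I obtain a feasible schedule in which each operation is cut into at most $\mu$ pieces. With this bound in hand I partition each window $(b_{j,k},c_{j,k}]$ into slots of length $\frac{\varepsilon}{\mu^3}(c_{j,k}-b_{j,k})$; the execution of $O_{j,k}$ then occupies at least $\mu^2$ slots, while at most $2\mu$ of them are only partially covered (the first and the last slot of each of the at most $\mu$ pieces). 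Discarding the partial slots and redistributing their work over the at least $\mu^2-2\mu$ fully covered slots costs a speedup factor of $1+\frac{2}{\mu-2}$, i.e.\ an energy factor $(1+\frac{2}{\mu-2})^{\alpha}$. Multiplying the three factors $(1+\varepsilon)^{\alpha}$, $(1+\frac{\varepsilon}{1-\varepsilon})^{\alpha}$ and $(1+\frac{2}{\mu-2})^{\alpha}$ yields the claimed bound.

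The step I expect to be the main obstacle is the piece-counting in the third paragraph. Unlike the single-processor EDF argument, where preemptions occur only at the $n$ release dates, here the windows $(b_{j,k},c_{j,k}]$ generally straddle many of the intervals $I_{\ell}$, and the slice bound of Baptiste et al.\ is stated per interval. The delicate part is therefore to turn the per-interval slice bound into a global bound of $\mu$ pieces per operation that is compatible with the single uniform grid $b_{j,k}+h\frac{\varepsilon}{\mu^3}(c_{j,k}-b_{j,k})$, while simultaneously preserving the processing times fixed in the first transformation together with the precedence constraints; this bookkeeping, rather than any single inequality, is where the real care is needed.
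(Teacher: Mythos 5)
Your construction matches the paper's proof in its overall architecture: the same windows $(b_{j,k},c_{j,k}]$ read off a schedule satisfying Lemma~\ref{lemma:discjs}, the same first transformation (speed-up by $1+\frac{\varepsilon}{1-\varepsilon}$, reservation of an $\frac{\varepsilon}{\mu}$-fraction of each slot of the window, slow-down of the operation so that its processing time becomes at least $\frac{\varepsilon}{\mu}(c_{j,k}-b_{j,k})$), and the same final slot counting (at least $\mu^2$ slots, at most $2\mu$ partially covered ones, speed-up $1+\frac{2}{\mu-2}$). However, the step you yourself flag as the ``main obstacle'' is a genuine gap, and the tool you propose for it is the wrong one. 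The result of Baptiste et al.~\cite{BaptisteCKQSS11} concerns the preemptive job shop whose only timing constraint is the makespan; after your first transformation every operation must additionally be confined to its own window $(b_{j,k},c_{j,k}]$ with a prescribed processing time, and the slice bound neither applies to this window-constrained problem nor guarantees that a rescheduling preserves the windows and processing times you just fixed. Gluing the per-interval slice bound over the intervals $I_{\ell}$ does not help either: an operation whose window straddles many intervals could be cut into as many as $\mu$ pieces \emph{per interval}, i.e.\ on the order of $\mu\tau$ pieces overall, which destroys the count of at most $2\mu$ partially covered slots on which the factor $1+\frac{2}{\mu-2}$ rests.

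The paper closes this step with a much more elementary device, the same one used in Lemma~\ref{lemma:time-discrete}: after the first transformation, run EDF on each processor separately, treating $b_{j,k}$ as the release date and $c_{j,k}$ as the deadline of $O_{j,k}$. This is where the disjointness of consecutive windows, $c_{j,k-1}\le b_{j,k}$ (a fact you correctly established in your first paragraph), does the real work: any schedule that keeps each operation inside its window automatically respects the precedence constraints, so the problem decouples into independent single-machine preemptive feasibility problems, one per processor, and EDF is guaranteed to produce a feasible schedule because one exists, namely the schedule produced by the first transformation. Since EDF interrupts an operation only when another operation is released, each processor incurs at most $\mu$ preemptions, which gives exactly the per-operation bound of at most $2\mu$ partial slots, compatible with the uniform grid $b_{j,k}+h\frac{\varepsilon}{\mu^3}(c_{j,k}-b_{j,k})$. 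So the missing idea is not bookkeeping around the slice bound; it is to abandon Baptiste et al.\ at this stage and invoke EDF on the windows instead.
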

\begin{proof}
Consider a schedule $\mathcal{S}$ satisfying Lemma~\ref{lemma:discjs}.
In $\mathcal{S}$, each interval $I_{\ell}$, $1 \leq \ell \leq \tau$, is partitioned into polynomial to $\mu$ and to $1/\varepsilon$ number of equal length slots.
In each of these slots, each operation $O_{j,k}$, $j\in \mathcal{J}$ and $1 \leq k \leq \mu_j$, is either executed during the whole slot or is not executed at all.
Let $b_{j,k}$ and $c_{j,k}$ be the starting time of the first piece and the completion time of the last piece, respectively, of $O_{j,k}$ in $\mathcal{S}$.

We will first transform the schedule $\mathcal{S}$ to a feasible schedule $\mathcal{S}'$ in which the execution time of each operation
$O_{j,k}$, $j\in \mathcal{J}$ and $1 \leq k \leq \mu_j$, is at least $\frac{\varepsilon}{\mu}(c_{j,k}-b_{j,k})$.
For each time slot $s$ of Lemma~\ref{lemma:discjs} we increase the processors' speeds in order to create an idle period of length $\varepsilon|s|$,
where $|s|$ is the length of the slot.
This can be done by increasing the speeds by a factor of $1+\frac{\varepsilon}{1-\varepsilon}$,
and hence the total energy consumption in $\mathcal{S}$ is increased by a factor of $(1+\frac{\varepsilon}{1-\varepsilon})^{\alpha}$.
For each operation $O_{j,k}$, $j\in \mathcal{J}$ and $1 \leq k \leq \mu_j$,
we reserve an $\frac{\varepsilon|s|}{\mu}$ period to each slot $s$ in $(b_{j,k},c_{j,k}]$.
We then decrease the speed of $O_{j,k}$ so that its total work is executed during the periods where $O_{j,k}$ was executed in $\mathcal{S}$
and the additional $c_{j,k}-b_{j,k}$ reserved periods.
Therefore, in the final schedule the processing time of each operation $O_{j,k}$ is at least $\frac{\varepsilon}{\mu}(c_{j,k}-b_{j,k})$.
After this transformation we apply the Earliest Deadline First (EDF) policy to the operations of each processor separately,
considering as release date and deadline of each operation $O_{j,k}$, $j\in \mathcal{J}$ and $1 \leq k \leq \mu_j$,
the time points $b_{j,k}$ and $c_{j,k}$, respectively.
This ensures that we have a feasible schedule with at most $\mu$ preemptions,
as in EDF an operation may be interrupted only when another operation is released.

Next, we transform $\mathcal{S}'$ to a new schedule $\mathcal{S}''$ satisfying the lemma.
For each operation  $O_{j,k}$, $j\in \mathcal{J}$ and $1 \leq k \leq \mu_j$,
we split the interval $(b_{j,k},c_{j,k}]$ into slots of length $\frac{\varepsilon}{\mu^3}(c_{j,k}-b_{j,k})$,
i.e., we partition $(b_{j,k},c_{j,k}]$ into intervals of the form
$(b_{j,k}+h\frac{\varepsilon}{\mu^3}(c_{j,k}-b_{j,k}),b_{j,k}+ (h+1)\frac{\varepsilon}{\mu^3}(c_{j,k}-b_{j,k})]$, where $h \geq 0$ is an integer.
As the processing time of $j$ in $\mathcal{S}$ is at least $\frac{\varepsilon}{\mu}(c_{j,k}-b_{j,k})$,
the execution of $O_{j,k}$ has been partitioned into at least $\mu^2$ slots.
In each of these slots, the operation $O_{j,k}$ either is executed during the whole slot or is executed into a fraction of it.
As we have applied the EDF policy, each operation is preempted at most $\mu$ times.
Thus, among the time slots that $O_{j,k}$ is executed, at most $2\mu$ of them are not fully occupied by $O_{j,k}$ because for each preempted piece of $O_{j,k}$ at most two slots may not be completely covered by it.
We can modify the schedule $\mathcal{S}'$ and get the schedule $\mathcal{S}''$
in which the operation $O_{j,k}$ is executed only to the slots where it was entirely executed in $\mathcal{S}'$.
The number of these slots is at least $\mu^2-2\mu$.
Thus, we have to increase the speed of $O_{j,k}$ by a factor of $1+\frac{2}{\mu-2}$,
and hence the energy is increased by a factor of $(1+\frac{2}{\mu-2})^{\alpha}$.
By taking into account Lemma~\ref{lemma:discjs}
and the fact that $\mathcal{S}'$ is a factor of $(1+\frac{\varepsilon}{1-\varepsilon})^{\alpha}$ far from $\mathcal{S}$, the lemma follows.
\end{proof}

Henceforth, we consider schedules that satisfy the above lemma.
That is, for each operation $O_{j,k}$, we consider that there is a polynomial number of candidate time points $b_{j,k}$ and $c_{j,k}$ such that $O_{j,k}$ is entirely executed during $(b_{j,k},c_{j,k}]$.
Moreover, the interval $(b_{j,k},c_{j,k}]$ is partitioned into a polynomial number of equal length slots so that, given such a slot, the operation $O_{j,k}$ is either executed during the whole slot or is not executed at all during that slot.

Now, we can formulate our problem as an integer program.
A configuration $c$ is a schedule for a single job $j$, i.e. a feasible schedule for all its operations.
So, a configuration specifies the interval $(b_{j,k},c_{j,k}]$ and the time slots inside this interval, with respect to Lemma~\ref{lemma:discjs2}, during which each operation $O_{j,k}$ of the job $j$ is executed.
Let $\mathcal{C}_j$ be the set of all possible feasible configurations for job $j\in \mathcal{J}$.

In order to proceed, we need an additional definition by combining the slots of all the operations.
Specifically, given a processor $i \in \mathcal{P}$, consider the time points of all operations of the form
$b_{j,k}+h\frac{\varepsilon}{\mu^3}(c_{j,k}-b_{j,k})$ as introduced in Lemmas~\ref{lemma:discjs} and~\ref{lemma:discjs2}.
Let $t_{i,1}, t_{i,2},\ldots,t_{i,p_i}$ be the ordered sequence of these time points on the processor $i\in\mathcal{P}$.
In a schedule that satisfies Lemma \ref{lemma:discjs2},
in each interval $(t_{i,q},t_{i,q+1}]$, $1 \leq q \leq p_i-1$, either there is exactly one operation that is executed during the whole interval or the interval is idle on $i$.
Note also that these intervals might not have the same length.
Let $\mathcal{I}$ be the set of all these intervals for all processors.
According to Lemmas~\ref{lemma:discjs} and~\ref{lemma:discjs2},
the size of $\mathcal{I}$ is polynomial to the size of the instance and to $1/\varepsilon$.

Recall that the execution interval of a job $j$ can be partitioned into a set of equal length time slots so that, for every such time slot, either a single operation of $j$ is executed during the whole slot or $j$ is not executed at all during that slot.
It has to be noticed that, by definition, every such slot consists of one or more intervals in $\mathcal{I}$ and every interval in $\mathcal{I}$ (during which some operation of $j$ is alive) is contained entirely in a single slot of $j$.

If we know the configuration according to which the job $j$ is executed,
we can compute the energy consumption $E_{j,c}$ for the execution of $j$ because there is always an optimal schedule such that each operation is executed with constant speed.
For notational convenience, we say that $I\in(j,c)$, if the job $j$ is executed during the interval $I \in \mathcal{I}$ according to the configuration $c$.
That is, there is an operation $O_{j,k}$, two time points $b_{j,k}$ and $c_{j,k}$, and a slot
$(b_{j,k}+h\frac{\varepsilon}{\mu^3}(c_{j,k}-b_{j,k}),b_{j,k}+(h+1)\frac{\varepsilon}{\mu^3}(c_{j,k}-b_{j,k})]$
in $c$ that contains $I$.
\begin{eqnarray}
\min \sum_{j,c} E_{j,c} \cdot x_{j,c} \nonumber\\
\sum_{c}x_{j,c}\geq 1 & \hspace{2cm} \forall j\in\mathcal{J} \label{js1}\\
\sum_{c\in\mathcal{C}_j : I\in(j,c)}x_{j,c}\leq 1 & \hspace{2cm} \forall I \in \mathcal{I} \label{js2}\\
x_{j,c}\in \{0,1\} & \hspace{2cm} \forall j\in\mathcal{J}, c\in\mathcal{C}_j \label{js3}
\end{eqnarray}

Constraint~(\ref{js1}) enforces that each job is entirely executed according to exactly one configuration.
Constraint~(\ref{js2}) ensures that at most one job is executed in each interval $I \in \mathcal{I}$.
We consider the relaxed LP of the above integer program where the integrality constraints $x_{j,c}\in \{0,1\}$
are replaced by the constraints $x_{j,c}\geq0$, for all $j\in\mathcal{J}$ and $c\in\mathcal{C}_j$.
This LP contains an exponential number of variables but it can be solved in polynomial time by applying the Ellipsoid algorithm to its dual as we explain in the following.
The dual LP is

\begin{eqnarray}
\min \sum_j \lambda_j - \sum_{I}\kappa_I \nonumber\\
\lambda_j - \sum_{I\in (j,c)}\kappa_I\leq E_{j,c} & \hspace{2cm} \forall j,c \\
\lambda_j, \kappa_I\geq 0
\end{eqnarray}

We will show that the dual program can be solved in polynomial time by applying the Ellipsoid algorithm.
In order to do so, it suffices to construct a polynomial time separation oracle.
Assume that we are given a solution $(\lambda_j,\kappa_I)$ for the dual LP.
The separation oracle works as follows.
For each job $j\in\mathcal{J}$, we try to minimize the term $E_{j,c}+\sum_{I\in (j,c)}\kappa_I$.
If the value $\min_c\{E_{j,c}+\sum_{I\in (j,c)}\kappa_I\}$ is less than $\lambda_j$, then we have a violated constraint.
Otherwise, the solution is feasible.

In order to find the configuration that minimizes the above expression, we use dynamic programming.
Consider some configuration $c$.
The contribution of the operation $O_{j,k}$ in the expression $E_{j,c}+\sum_{I\in (j,c)}\kappa_I$
is the energy consumption of $O_{j,k}$ plus the $\kappa_I$'s of the intervals $I\in\mathcal{I}$ contained in the time slots during which $O_{j,k}$ is executed.
Let $A_{k,I}$ be the minimum contribution of the operations $O_{j,1},O_{j,2},\ldots,O_{j,k}$ to the objective function of our separation problem among the configurations
in which $O_{j,k}$ completes not later than $I$.
Furthermore, let $B_{k,I',I}$ be the minimum contribution of the operation $O_{j,k}$ to the objective function of separation problem
among the configurations in which it is executed after $I'$ and not later than $I$.
Clearly,
\begin{equation*}
A_{k,I}=\min_{I'<I}\{A_{k-1,I'}+B_{k,I',I}\}
\end{equation*}
In order to complete our dynamic programming algorithm, we have to specify a way of computing efficiently the term $B_{k,I',I}$.
Assume that there are $\ell$ time slots, between $I'$ and $I$, during which $O_{j,k}$ can be executed respecting Lemma \ref{lemma:discjs2}.
If we restrict our attention to configurations in which $O_{j,k}$ is executed in exactly $q \leq \ell$ slots,
$O_{j,k}$ must be executed during the $q$ slots with the minimum $\kappa_{I}$'s so that $E_{j,c}+\sum_{I\in (j,c)}\kappa_I$ is minimized.
These slots can be computed easily.
In order to compute $B_{k,I',I}$, it suffices to check all possible values $q=1,2,\ldots,\ell$.

Thus, for each job $j \in \mathcal{J}$,
we can compute, in polynomial time, the minimum $E_{j,c}+\sum_{I\in (j,c)}\kappa_I$ among all the configurations $c \in \mathcal{C}_j$,
and hence we have a polynomial-time separation oracle for the dual LP.
So, we can solve the dual LP in polynomial time with the Ellipsoid algorithm.
Given our discussion in Section \ref{section:migration}, we can solve the relaxed LP as well.
Then, by applying the same randomized rounding algorithm and analysis as in Section~\ref{section:rr}, we obtain the following theorem.

\begin{theorem}\label{thm:jobshop}
There is an algorithm of running time polynomial to $\mu$ and to $1/\varepsilon$
with approximation ratio $(1+\varepsilon)^{\alpha} (1+\frac{2}{\mu-2})^{\alpha} (1+\frac{\varepsilon}{1-\varepsilon})^{\alpha} \tilde{B}_{\alpha}$
for the preemptive job shop scheduling problem with the energy objective, where $\varepsilon\in(0,1)$.
\end{theorem}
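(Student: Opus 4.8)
The plan is to assemble the three ingredients already developed in this section. First I would establish that the optimal value $LP^*$ of the relaxed configuration LP is bounded by the discretized optimum: by Lemma~\ref{lemma:discjs2} there is a feasible schedule respecting the slot structure whose energy is at most $(1+\varepsilon)^{\alpha}(1+\frac{2}{\mu-2})^{\alpha}(1+\frac{\varepsilon}{1-\varepsilon})^{\alpha}\cdot OPT$. Encoding this schedule by setting $x_{j,c}=1$ for the configuration it induces on each job $j$ gives a feasible integral solution of the program~(\ref{js1})--(\ref{js3}), so $LP^*\le (1+\varepsilon)^{\alpha}(1+\frac{2}{\mu-2})^{\alpha}(1+\frac{\varepsilon}{1-\varepsilon})^{\alpha}\cdot OPT$.

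Second, I would invoke the polynomial-time solvability of the relaxed LP established just above the statement: the dynamic program computing $A_{k,I}$ and $B_{k,I',I}$ furnishes a polynomial-time separation oracle for the dual, so, exactly as in Section~\ref{section:migration}, the Ellipsoid method produces an optimal fractional solution $\{x_{j,c}\}$ supported on polynomially many configurations, in time polynomial in $\mu$ and $1/\varepsilon$. We may assume $\sum_c x_{j,c}=1$ for every $j$, since energies are nonnegative.

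Third, I would run the randomized rounding of Section~\ref{section:rr} verbatim: choose, independently for each job $j$, a configuration $c$ with probability $x_{j,c}$, and on every interval $I\in\mathcal{I}$ set the speed of its processor $i$ so that the total work routed to $I$ is executed at constant speed. The analysis of Theorem~\ref{thm:approx} transfers without change once one observes that, for a fixed interval $I$ on processor $i$, each job $j$ places an operation in $I$ independently with probability $Y_j=\sum_{c:\,I\in(j,c)}x_{j,c}$, contributing a single well-defined energy term $e_{j}=|I|s^{\alpha_i}$, because each operation runs at constant speed and, by the slot construction together with the precedence order, at most one operation of $j$ can lie in $I$. Propositions~\ref{Prop:ConcaveExpectation},~\ref{prop:split},~\ref{prop:meanInequality},~\ref{TechnicalPoisson} and~\ref{prop:bounds}(a) then give $\mathbb{E}[E_I]\le LP_I^*\cdot\tilde{B}_{\alpha_i}$; summing over all intervals and processors and using $\alpha_i\le\alpha$ with the monotonicity of $\tilde{B}$ yields $\mathbb{E}[E]\le LP^*\cdot\tilde{B}_{\alpha}$, which combined with the first step gives the claimed ratio in expectation.

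The step I expect to require the most care is verifying feasibility of the rounded schedule, since the rounding can route several jobs to the same interval. I would argue that replacing the parallel combination on $I$ by sequentially executing the assigned operations at the common speed $s_I=\frac{1}{|I|}\sum_j w_j(I)$ leaves the energy $|I|s_I^{\alpha_i}$ unchanged, and that precedence is preserved: each job follows a single precedence-respecting configuration, consecutive operations of a job occupy disjoint windows $(b_{j,k},c_{j,k}]$ with $c_{j,k}\le b_{j,k+1}$, and all work routed to an interval is completed within that interval, so no operation's window is violated. Making this within-interval sequentialization precise, for operations spanning several consecutive intervals and across the per-processor interval sets of $\mathcal{I}$, is the only genuinely new point relative to Section~\ref{section:non-migration}.
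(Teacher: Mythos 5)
Your proposal is correct and follows essentially the same route as the paper: the discretization of Lemmas~\ref{lemma:discjs} and~\ref{lemma:discjs2} to bound the configuration LP against $OPT$, the Ellipsoid method with the dynamic-programming separation oracle for the dual, and the randomized rounding and analysis of Section~\ref{section:rr} carried over interval by interval. Your closing discussion of feasibility (sequentializing the work routed to an interval at the common speed and checking that precedence is preserved because each configuration's operation windows are disjoint and at most one operation of a job can occupy a given interval) is a detail the paper leaves implicit, but it is consistent with, not a departure from, the paper's argument.
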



\section{Routing}
\label{section:routing}

Now, we turn our attention to the min-power routing problem.
Formally, we are given a directed graph $G=(V,E)$ and a set of demands $\mathcal{D}$.
Each demand $i\in\mathcal{D}$ is associated with a source node $s_i$, a destination node $t_i$ and it requests $d_i$ integer units of bandwidth.
We consider the special case where all the demands request the same bandwidth, i.e. $d_{i}=d$ for all $i\in\mathcal{D}$.
Each edge $e \in E$ is associated with a constant $\alpha_e$ such that if $f$ units of demand cross $e$, then there is an energy consumption equal to $c_ef^{\alpha_e}$.
The objective is to route all the demands from their sources to their destinations so that the total energy consumption is minimized.
We consider the unsplittable version of the problem in which each demand has to be routed through a single path.

Andrews et al. \cite{AndrewsAZZ12} formulated the above problem as an integer convex program and they presented an analysis based on randomized rounding.
Using a similar analysis as in Section~\ref{section:rr}, we show that the algorithm presented in \cite{AndrewsAZZ12}
has a significantly better approximation ratio (see Table~\ref{tbl:results}).

In order to obtain the integer convex programming formulation in \cite{AndrewsAZZ12} for the min-power routing problem, we introduce a  variable $x_e$, for all $e\in E$, which corresponds to the number of demands that cross the edge $e$
and a binary variable $y_{i,e}$ which indicates if the demand $i\in\mathcal{D}$ crosses the edge $e$.
The integer convex program follows.

\begin{eqnarray}
\min \sum_{e\in E} c_ed^{\alpha_e}\max\{x_e,x_e^{\alpha_e}\} \nonumber\\
x_e=\sum_i y_{i,e} & \hspace{2cm} \forall e\in E \label{eqn:r1}\\
\sum_{e\in\Gamma^+(u)} y_{i,e} - \sum_{e\in\Gamma^-(u)} y_{i,e} = 0 & \hspace{2cm} \forall i \in \mathcal{D}, u\in V \setminus \{s_i,t_i\} \label{eqn:r2}\\
\sum_{e\in\Gamma^+(s_i)}y_{i,e}=1 & \hspace{2cm} \forall i\in\mathcal{D} \label{eqn:r3}\\
\sum_{e\in\Gamma^-(t_i)}y_{i,e}=1 & \hspace{2cm} \forall i\in\mathcal{D} \label{eqn:r4}\\
y_{i,e}\in\{0,1\} & \hspace{2cm} \forall i\in\mathcal{D},e\in E \label{eqn:r5}
\end{eqnarray}

The above integer convex program is a valid formulation for our problem.
Our goal is to minimize the total energy consumption of all edges, i.e. $\sum_{e\in E}c_ed^{\alpha_e} x_e^{\alpha_e}$.
Since all variables $x_e$ are integers in any feasible integral solution,
the above program has the same optimal integral solution as if we have used as objective the $\sum_{e\in E}c_ed^{\alpha_e} x_e^{\alpha_e}$.
However, the use of this objective leads to an integer program with large integrality gap \cite{AndrewsAZZ12}.
For this reason, we modify the objective to be $\sum_{e\in E}c_ed^{\alpha_e} \max\{x_e,x_e^{\alpha_e}\}$ obtaining a program with smaller integrality gap.
Equation~(\ref{eqn:r1}) relates the variables $x_e$ and $y_{i,e}$, while Equations~(\ref{eqn:r2})-(\ref{eqn:r4}) ensure the flow conservation.

In order to obtain a feasible integral solution for our problem, we solve the relaxation of the above convex program,
where the constraints $y_{i,e}\in\{0,1\}$ are relaxed so that $y_{i,e}\geq0$, and we obtain a fractional solution.
Then, we apply a randomized rounding procedure, introduced by Raghavan and Thompson \cite{RaghavanT91}, in order to select a path for each demand.
Specifically, for each demand $i\in\mathcal{D}$, we consider the subgraph of $G$ that contains only the edges with $y_{i,e}>0$
and define the standard flow decomposition.
We compute a $(s_i,t_i)$-path $p$ on this graph and we set $z_{i,p}=\min_{e\in p}\{y_{i,e}\}$.
Then, we subtract $z_{i,p}$ from the variables $y_{i,e}$ which correspond to the edges of the path $p$.
We continue this procedure until there are no $(s_i,t_i)$-paths.
The randomized rounding algorithm chooses a path $p$ for the demand $i$ with probability $z_{i,p}$.
Note that $\sum_{p}z_{i,p}=1$.

\begin{theorem}\label{thm:routing}
There is a $\tilde{B}_{\alpha_{max}}$-approximation algorithm for the min-power routing problem with uniform demands.
\end{theorem}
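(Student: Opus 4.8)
The plan is to mirror the per-interval analysis of Section~\ref{section:rr}, carried out now edge by edge. First I would exploit the fact that the randomized rounding routes the demands independently: for a fixed edge $e\in E$, the indicator that demand $i\in\mathcal{D}$ traverses $e$ is a Bernoulli random variable, and by the standard flow decomposition the probability of this event is exactly $\sum_{p\ni e}z_{i,p}=y_{i,e}$. Hence the number $N_e$ of demands crossing $e$ in the rounded solution is a sum of \emph{independent} Bernoulli random variables (one per demand) with $\mathbb{E}[N_e]=\sum_{i}y_{i,e}=x_e$, and the random energy spent on $e$ is $c_ed^{\alpha_e}N_e^{\alpha_e}$. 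By linearity, the expected energy of the computed routing $\mathcal{S}$ is $\mathbb{E}[E(\mathcal{S})]=\sum_{e\in E}c_ed^{\alpha_e}\,\mathbb{E}[N_e^{\alpha_e}]$, so it suffices to bound $\mathbb{E}[N_e^{\alpha_e}]$ for each edge separately.

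Next I would bound $\mathbb{E}[N_e^{\alpha_e}]$ by a Poisson moment. Since $x\mapsto x^{\alpha_e}$ is convex for $\alpha_e\geq1$ and $N_e$ is a sum of independent $[0,1]$-valued random variables, the same argument used in Proposition~\ref{TechnicalPoisson} (namely Proposition~\ref{Berend} with $t=\infty$) gives $\mathbb{E}[N_e^{\alpha_e}]\leq\mathbb{E}[P_{x_e}^{\alpha_e}]$, where $P_{x_e}$ is a Poisson random variable with parameter $x_e$. Crucially, here the mean $x_e$ may exceed $1$, so I would invoke Proposition~\ref{Berend} directly rather than the $a\in[0,1]$ restatement, as its conclusion holds for arbitrary mean. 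I would then apply both cases of Proposition~\ref{prop:bounds}: if $x_e\leq1$ then $\mathbb{E}[P_{x_e}^{\alpha_e}]\leq x_e\tilde{B}_{\alpha_e}$, whereas if $x_e>1$ then $\mathbb{E}[P_{x_e}^{\alpha_e}]\leq x_e^{\alpha_e}\tilde{B}_{\alpha_e}$, with $\tilde{B}_{\alpha_e}=\mathbb{E}[P_1^{\alpha_e}]$. The two regimes combine into the single bound $\mathbb{E}[N_e^{\alpha_e}]\leq\max\{x_e,x_e^{\alpha_e}\}\,\tilde{B}_{\alpha_e}$, which is precisely why the objective of the convex program was taken to be $\max\{x_e,x_e^{\alpha_e}\}$ rather than $x_e^{\alpha_e}$.

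To conclude, I would replace $\tilde{B}_{\alpha_e}$ by $\tilde{B}_{\alpha_{max}}$, where $\alpha_{max}=\max_{e\in E}\alpha_e$, using the monotonicity of the generalized Bell number: since $k^{\alpha}$ is non-decreasing in $\alpha$ for every integer $k\geq0$, the series $\tilde{B}_{\alpha}=\sum_{k\geq0}k^{\alpha}e^{-1}/k!$ is non-decreasing in $\alpha$, so $\tilde{B}_{\alpha_e}\leq\tilde{B}_{\alpha_{max}}$ for all $e$. Summing over the edges then yields
\begin{equation*}
\mathbb{E}[E(\mathcal{S})]\leq\tilde{B}_{\alpha_{max}}\sum_{e\in E}c_ed^{\alpha_e}\max\{x_e,x_e^{\alpha_e}\}=\tilde{B}_{\alpha_{max}}\cdot z^*,
\end{equation*}
where $z^*$ is the optimal value of the relaxed convex program. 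Finally I would note that $z^*\leq OPT$: on an integral solution $\max\{x_e,x_e^{\alpha_e}\}=x_e^{\alpha_e}$ whenever $x_e\geq1$ (and both vanish for $x_e=0$), so the integer program is a valid formulation whose optimum equals $OPT$, and relaxing the integrality of $y_{i,e}$ can only decrease it. Hence $\mathbb{E}[E(\mathcal{S})]\leq\tilde{B}_{\alpha_{max}}\cdot OPT$, giving the claimed ratio in expectation. The main obstacle, and the only genuine departure from Section~\ref{section:rr}, is that the edge loads $x_e$ need not be bounded by $1$; handling $x_e>1$ is exactly where Proposition~\ref{prop:bounds}(b) together with the unrestricted-mean Poisson comparison is required, and where the $\max\{x_e,x_e^{\alpha_e}\}$ objective earns its role by matching both regimes simultaneously.
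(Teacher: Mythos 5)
Your proposal is correct and takes essentially the same route as the paper's proof: fix an edge, view the number of crossing demands as a sum of independent Bernoulli variables with mean $x_e$, dominate its $\alpha_e$-th moment by the corresponding Poisson moment, apply both cases of Proposition~\ref{prop:bounds} to arrive at the $\max\{x_e,x_e^{\alpha_e}\}$ term of the relaxed objective, and sum over edges. The only difference is that the paper first equalizes the Bernoulli probabilities by iterating Proposition~\ref{prop:split} and then cites Proposition~\ref{TechnicalPoisson}, whereas you invoke Proposition~\ref{Berend} with $t=\infty$ directly on the non-identical Bernoullis; your version is, if anything, more careful, since you address explicitly that $x_e$ may exceed $1$, a case outside the stated hypothesis $a\in[0,1]$ of Proposition~\ref{TechnicalPoisson} but covered by Proposition~\ref{Berend}.
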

\begin{proof}
Consider an edge $e \in E$ and let $\lambda_e=\sum_{i \in \mathcal{D}} y_{i,e}$ be the expected number of demands that cross $e$.
The expected energy consumption on the edge $e$ is
\begin{equation}
\nonumber
E_e = c_ed^{\alpha_e}\sum_{S\subseteq \mathcal{D}}|S|^{\alpha_e} Pr(S)
\end{equation}
where $Pr(S)$ is the probability that exactly the demands in $S$ are routed through the edge $e$.
Hence, we have
\begin{eqnarray*}
E_e & = & c_ed^{\alpha_e}\sum_{S\subseteq \mathcal{D}}|S|^{\alpha_e} \prod_{i\in S} y_{i,e} \prod_{i \not\in S}(1-y_{i,e}).
\end{eqnarray*}
Since $y_{i,e}$ come from a mathematical programming solver, we can assume that there exists $N \in \mathbb{N}$ such that $y_{i,e}=\lambda_e\cdot \frac{q_{i,e}}{N}$ for some $q_{i,e} \in \mathbb{N}$.
Similarly with the proof of Theorem~\ref{thm:approx},   we can chop each $y_{i,e}$ into $q_{i,e}$ pieces $z_{i,e,\ell}=\frac{\lambda_e}{N}$.
Note that, $N = \sum_{i \in \mathcal{D}} q_{i,e}$ since $\sum_{i \in \mathcal{D}} \frac{y_{i,e}}{\lambda_e}=1$.
For the ease of exposition we identify the set $\{1,2,\ldots,N\}$ with the set of all pairs $((i,e),\ell)$ such that $i \in \mathcal{D}$ and $1 \leq \ell \leq q_{i,e}$.
By applying Proposition~\ref{prop:split} iteratively, we get
\begin{eqnarray*}
E_e & \leq & c_ed^{\alpha_e} \sum_{S\subseteq \{1,2,\ldots,N\}}|S|^{\alpha_e} \left(\frac{\lambda_e}{N}\right)^{|S|} \left(1-\frac{\lambda_e}{N}\right)^{N-|S|}\\
 & = & c_ed^{\alpha_e} \sum_{k=0}^N \sum_{S \in \{1,2,\ldots,N\}, |S|=k} k^{\alpha_e} \left(\frac{\lambda_e}{N}\right)^k \left(1-\frac{\lambda_e}{N}\right)^{N-k}\\
 & = & c_ed^{\alpha_e} \sum_{k=0}^N k^{\alpha_e} {N \choose k} \left(\frac{\lambda_e}{N}\right)^k \left(1-\frac{\lambda_e}{N}\right)^{N-k}
\end{eqnarray*}
as there are ${N \choose k}$ subsets of $\{1,2,\ldots,N\}$ with $k$ elements.
The sum in the last expression is the $\alpha_e$-th moment of a Binomial random variable $B_{\lambda_e}$ (sum of independent Bernoulli trials) with expectation $\lambda_e$.
Hence, by using Propositions~\ref{TechnicalPoisson} and~\ref{prop:bounds} we get
\begin{equation}
\nonumber
E_e \leq c_ed^{\alpha_e} \mathbb{E}[B_{\lambda_e}^{\alpha_e}]  \leq c_ed^{\alpha_e} \mathbb{E}[P_{\lambda_e}^{\alpha_e}] \leq c_ed^{\alpha_e}\max\{\lambda_e,\lambda_e^{\alpha_e}\} \mathbb{E}[P_{1}^{\alpha_e}] =  LP_e^* \tilde{B}_{\alpha_e}
\end{equation}
where $P_{\lambda_e}$ is a Poisson random variable with parameter $\lambda_e$.
By summing up over all edges and setting $\alpha=\max_{e\in E}\{\alpha_e\}$,
the theorem follows.
\end{proof}

\section{Conclusions}

We have presented a unified framework for dealing with various scheduling and routing problems in speed-scaling setting.
Our algorithms are based on configuration linear programs and randomized rounding.
Improving the approximation ratios or studying the inapproximability of the considered problems are some of the interesting directions for future work.
The most intriguing open question is the existence of a constant factor approximation algorithm for the non-preemptive multiprocessor scheduling problem.

\section*{Acknowledgements}

We would like to thank Oleg Pikhurko for providing the original proof of Part (b) of the Proposition \ref{prop:bounds}.

\end{document}